\documentclass[10pt,a4paper]{article}

\usepackage{amsmath,amssymb,amsthm,mathtools}
\usepackage{geometry}
\usepackage{hyperref}
\usepackage[nameinlink]{cleveref}
\usepackage{microtype}
\usepackage{enumitem}
\usepackage{bm}
\usepackage[numbers,sort&compress]{natbib}
\usepackage{doi}
\usepackage{xcolor}
\usepackage{booktabs}

\hypersetup{colorlinks=true,linkcolor=blue!60!black,citecolor=blue!60!black,urlcolor=blue!60!black}

\theoremstyle{plain}
\newtheorem{theorem}{Theorem}[section]
\newtheorem{proposition}[theorem]{Proposition}
\newtheorem{lemma}[theorem]{Lemma}
\newtheorem{corollary}[theorem]{Corollary}
\theoremstyle{definition}
\newtheorem{definition}[theorem]{Definition}
\newtheorem{example}[theorem]{Example}
\theoremstyle{remark}
\newtheorem{remark}[theorem]{Remark}

\newcommand{\R}{\mathbb{R}}
\newcommand{\dd}{\,\mathrm{d}}
\newcommand{\kb}{k_{\mathrm{B}}}
\newcommand{\Lam}{\Lambda}
\newcommand{\N}{\mathbb{N}}
\newcommand{\GN}{\Gamma_N}
\newcommand{\Ms}{\mathcal{M}}       
\newcommand{\CGM}{\mathcal{C}}      
\newcommand{\CGx}{\mathcal{C}_x}    
\newcommand{\CGp}{\mathcal{C}_p}    
\newcommand{\Ind}[1]{\mathbf{1}_{#1}}
\newcommand{\abs}[1]{\left|#1\right|}
\newcommand{\norm}[1]{\left\|#1\right\|}
\newcommand{\fone}{f^{(1)}}
\newcommand{\ftwo}{f^{(2)}}
\newcommand{\utwo}{u^{(2)}}
\newcommand{\us}{u^{(s)}}
\newcommand{\MB}{f_{\mathrm{MB}}}
\newcommand{\kB}{k_{\mathrm{B}}}
\newcommand{\SCG}{S_{\mathrm{CG}}}
\newcommand{\SG}{S_{\mathrm{Gibbs}}}
\newcommand{\pia}{\pi_{i,\alpha}}
\newcommand{\Esurf}{\mathcal{E}_\partial}

\begin{document}

\title{Entropy additivity from exponential decay of correlations: a coarse-grained operator approach}

\author{Bob Osano\\[4pt]
\small Cosmology and Gravity Group,\\
  \small Department of Mathematics and Applied Mathematics,\\
    \small and Centre for Higher Education Development,\\
  \small University of Cape Town (UCT),\\
  \small Rondebosch 7701, Cape Town, South Africa\\[2pt]
  \small \href{mailto:bob.osano@uct.ac.za}{bob.osano@uct.ac.za}}

\date{\today}
\maketitle

\begin{abstract}
Thermodynamic extensivity is commonly introduced as a postulate — the homogeneity of degree one of thermodynamic potentials. We provide a constructive derivation of this property from microscopic conditions on the pair potential, without assuming it. Working with the one- and two-particle reduced densities of the $N$-body canonical Gibbs state, we introduce a combined coarse-graining operator  $\CGM$ on single-particle phase space $\Ms = \Lambda\times\R^3$ , producing dimensionless mesoscopic probabilities $\{V_i\times\Pi_\alpha\}$ over spatial--momentum cells. Under three conditions on the pair potential — stability, temperedness, and exponential cluster decomposition with correlation length $\xi$. We show using the Ursell cluster expansion that the coarse-grained entropy satisfies
\[
  S_{\mathrm{CG}} = \sum_{i} S_i
    + O\!\left(\frac{\abs{\Lambda}}{\ell^d}\,e^{-\ell/\xi}\right),
\]
where $\ell\gg\xi$ is the cell diameter. The correction is exponentially suppressed per cell, making entropy additive and recovering the thermodynamic limit of Ruelle and Fisher in explicit operator language. For systems with long-range interactions, where temperedness fails, the correction does not vanish, and non-additivity is quantified through inter-cell mutual information. We further show that spatial averaging does not commute with nonlinear thermodynamic functionals such as the entropy density — a thermodynamic analogue of the cosmological averaging problem — and we derive the generalised Euler relation with explicit surface corrections.
\end{abstract}

\tableofcontents

\section{Introduction}
\label{sec:intro}
The distinction between intensive and extensive thermodynamic variables occupies a central position in classical thermodynamics, yet it is not primitive. We argue that the distinction emerges from the interplay of three structural ingredients: the existence of a thermodynamic limit, the short-range character of microscopic interactions, and the decay of correlations. Standard treatments~\cite{Callen1985, Gibbs1873, Gibbs1902} identify extensivity with first-degree homogeneity of thermodynamic potentials---a postulate that implicitly encodes all three ingredients without making them explicit. Our goal in this paper is to make the emergence of extensivity \emph{constructive}. We introduce a combined coarse-graining operator $\CGM$ acting on the single-particle phase space $\Ms = \Lambda\times\R^3$ of an $N$-body canonical ensemble, building on the one-particle reduced density $\fone$ and the Ursell cluster expansion for the reduced $s$-particle densities.  Within this framework, we demonstrate that extensivity is equivalent to the statistical factorisation of $\fone$ across spatial cells, and that the coarse-grained entropy is additive up to a correction that is exponentially small in $\ell/\xi$ (the cell diameter in units of the correlation length). It follows that when the pair potential violates temperedness, and correlations do not decay exponentially, this factorisation fails, and the entropy is genuinely non-additive, with the departure quantified by inter-cell mutual information.

The connection between extensivity, decay of correlations, and factorisation of the partition function is implicit in the foundational work of Ruelle~\cite{Ruelle1969}, Fisher~\cite{Fisher1964}, and Lebowitz--Penrose~\cite{LebowitzPenrose1966}. It is worth stating upfront what we consider to be the contribution in our paper:
\begin{enumerate}[label=(\roman*)]
  \item A self-contained \emph{operator formalism} based on reduced densities that makes the
    factorisation mechanism explicit and quantitative via a joint coarse-graining operator
    acting on single-particle phase space---a step not taken in the classical literature.
  \item A \emph{ proof of entropy additivity at higher orders} of the cluster expansion
    (pairs, triples, and all higher cumulants), yielding a clean bound on the total deviation
    from extensivity.
  \item Although we do not provide a rigorous proof in the present work, we identify the \emph{non-commutativity of nonlinear thermodynamic
    functionals with spatial averaging} as the structural source of non-extensivity, and connect it to the Buchert--Ostermann commutation
    problem~\cite{Buchert2000,Rasanen2006,Wiltshire2011,Korzynski2010} in relativistic cosmology. The details of this are reserved for future work focusing on cosmology.
\end{enumerate}

Coarse-graining procedures appear throughout statistical mechanics as phase-space partitioning~\cite{Gibbs1873,Gibbs1902,EhrenfestEhrenfest1912} or as spatial averaging in kinetic theory~\cite{IrvingKirkwood1950, Cercignani1988}. Projection-operator
techniques~\cite{Zwanzig1961,Mori1965} reduce microscopic dynamics to relevant variables.
The rigorous foundations of the thermodynamic limit are in
Ruelle~\cite{Ruelle1963,Ruelle1969}, Fisher~\cite{Fisher1964}, and
Lebowitz--Penrose~\cite{LebowitzPenrose1966}; the Gibbs measure framework is treated
rigorously in Georgii~\cite{Georgii1988} and Dobrushin~\cite{Dobrushin1968}.
Non-extensive statistical mechanics is studied systematically in
Tsallis~\cite{Tsallis1988} and reviewed in Touchette~\cite{Touchette2002}. The present work reformulates established thermodynamic-limit mechanisms within an explicit coarse-graining and information-theoretic framework, following the approach developed in \cite{Osano2016a}.
This paper is organised as follows. \cref{sec:framework} sets up the $N$-body framework and defines the reduced densities.
\cref{sec:cg} introduces the combined coarse-graining operator and demonstrates its
properties, including the non-commutativity of nonlinear functionals with spatial averaging.
\cref{sec:extensivity} states and proves the main proposition together with auxiliary
lemmas, derives the generalised Euler relation, and includes a worked example.
\cref{sec:discussion} discusses consequences and \cref{sec:conclusion} concludes.

\section{The N-Body Framework and Reduced Densities}
\label{sec:framework}
Our interest lies in the characterisation of subsystems and the interactions that may arise between them. Consequently, it is necessary to establish a precise framework for the definition and interpretation of extensivity for subsystems. To this extent, we adopt the approaches used in \cite{Lanford1973,Simon1993,Israel1979,DobrushinShlosman1985}
\subsection{Two Levels of the Extensivity Question}
\label{sub:two_levels}

We distinguish two logically independent levels:
\begin{enumerate}[label=(\roman*)]
  \item \textbf{Phenomenological level.}
    A state function $X(S,V,N)$ is \emph{extensive} if
    \begin{equation}\label{eq:ext_def}
      X(\lambda S,\,\lambda V,\,\lambda N) = \lambda\, X(S,V,N)
      \quad\text{for all }\lambda>0.
    \end{equation}
    Equivalently, $X$ is homogeneous of degree one in $(S,V,N)$.  At this level,
    extensivity is defined by postulate~\cite{Callen1985,Redlich1970}. We note that $S, V$, and $N$ retain their conventional interpretations as entropy, volume, and particle number, respectively.  \item \textbf{Structural level.}
    The present paper derives \eqref{eq:ext_def} from microscopic conditions on the pair
    potential (Definitions~\ref{def:stability}--\ref{def:cluster} below), without
    appealing to \eqref{eq:ext_def} as an assumption. The two levels are logically
    independent; the phenomenological classification motivates but does not answer the
    structural question.
\end{enumerate}

\subsection{The Classical N-Body System}
\label{sub:N_body}
To investigate the subject of the previous, let us consider $N$ identical classical particles of mass $m$ in a bounded domain
$\Lam\subset\R^d$ ($d\ge 1$) with pairwise interaction potential
$\phi:\R^d\to\R\cup\{+\infty\}$. We can express the total Hamiltonian for this system as follows 
\begin{equation}\label{eq:Ham}
  H_N(\gamma)
  = \sum_{k=1}^N \frac{\abs{p_k}^2}{2m}
    + \sum_{1\le k<l\le N}\phi(q_k-q_l),
  \qquad
  \gamma=(q_1,\dots,q_N,p_1,\dots,p_N)\in\GN,
\end{equation} The phase-space coordinates appearing in~\ref{eq:Ham} are defined as follows:

\begin{itemize}
    \item $q_{i} \in \mathbb{R}^{3}$ and $p_{i} \in \mathbb{R}^{3}$
          are the position and momentum coordinates of the $i$-th
          particle, respectively, for $i = 1, \ldots, N$.
          Together, they define the full $6N$-dimensional phase-space
          point
          \[
              \gamma \;=\; (q_1, \ldots, q_N,\, p_1, \ldots, p_N)
              \;\in\;
              \Gamma_{N} \;\cong\; \mathbb{R}^{6N}.
          \]
    \item $p_{k}$ is the momentum coordinate of the $k$-th particle,
          $k = 1, \ldots, N$, used in the indicator decomposition
          $\mathbf{1}_{V_i \times \Omega_{\alpha}}(\gamma_{k})$
          when determining the cell $V_i \times \Omega_{\alpha}$ to
          which particle $k$ is assigned.
   \item $\GN = \Lam^N\times(\R^d)^N$ is the $2dN$-dimensional phase space.

\end{itemize}
We can then specify the Gibbs measure at inverse temperature $\beta=1/(\kb T)$ as follows:
\begin{equation}\label{eq:gibbs}
  \varrho_N(\gamma) = Z_N^{-1}\,e^{-\beta H_N(\gamma)},
  \qquad
  Z_N = \int_{\GN} e^{-\beta H_N(\gamma)}\dd\gamma.
\end{equation}

\subsection{Reduced Densities}
\label{sub:reduced}

\begin{definition}[Reduced $s$-particle density]\label{def:reduced}
  For $1\le s\le N$, the \emph{reduced $s$-particle density} is
  \begin{equation}\label{eq:reduced}
    f^{(s)}(z_1,\dots,z_s)
    := \frac{N!}{(N-s)!}
       \int_{\GN} \varrho_N(\gamma)\,
       \prod_{k=s+1}^{N}\dd z_k,
  \end{equation}
  where $z_k=(q_k,p_k)\in\Ms:=\Lam\times\R^d$ is the single-particle phase-space
  coordinate.  In particular, $f^{(s)}\ge 0$ and
  \begin{equation}\label{eq:reduced_norm}
    \int_{\Ms^s} f^{(s)}(z_1,\dots,z_s)\,\dd z_1\cdots\dd z_s = \frac{N!}{(N-s)!}.
  \end{equation}
  For $s=1$: $\int_\Ms \fone(z)\dd z = N$.
  For $s=2$: $\int_{\Ms^2} \ftwo(z_1,z_2)\dd z_1\dd z_2 = N(N-1)$.
\end{definition}
The construction of reduced densities via marginalisation of the Gibbs measure follows the standard framework of \cite{Lanford1973, Ruelle1969}.
\begin{remark}
  The domain of each reduced density is the \emph{single-particle phase space}
  $\Ms = \Lam\times\R^d$, \emph{not} the $N$-body phase space $\GN$.  This is the
  fundamental distinction from the formulations in phase-space coarse-graining
  approaches~\cite{Zwanzig1961}: the reduced densities live on a fixed low-dimensional
  space regardless of $N$, making them the natural objects on which to define spatial
  coarse-graining.
\end{remark}

\subsection{Ursell Cluster Decomposition}
\label{sub:ursell}
To handle the interconnected cells, we need an Ursell function. In particular, we need the Ursell decomposition to factorise the $s$-particle reduced density into connected
(irreducible) parts \cite{Penrose1967, Brydges1986, Gallavotti1999}. For example, for $s=2$, we characterise the factorisation as follows
\begin{equation}\label{eq:ursell2}
  \ftwo(z_1,z_2)
  = \fone(z_1)\,\fone(z_2) + \utwo(z_1,z_2),
\end{equation}
where the \emph{pair Ursell function} $\utwo$ satisfies $\int_\Ms \utwo(z_1,z_2)\dd z_2 = 0$
for all $z_1$. Eq.\ref{eq:ursell2} ensures that the pair Ursell function $u^{(2)}$ carries only the genuinely two-body connected correlation, with no contribution from the single-particle marginals already accounted for in $f^{(1)}(z_1)f^{(1)}(z_2) $. Without this condition, the factorised part and the connected correction would overlap, double-counting the one-body contribution to the two-body density. The same orthogonality condition generalises to all orders $s\geq 2$.
 For general $s$, the $s$-body Ursell function $\us$ is defined recursively
through the cumulant expansion of $f^{(s)}$(see, e.g., \cite{Gallavotti1999, Brydges1986}); it encodes purely $s$-body connected
correlations and satisfies
\begin{equation}\label{eq:ursell_int}
  \int_\Ms \us(z_1,\dots,z_s)\dd z_s = 0
  \quad\text{for all }z_1,\dots,z_{s-1}.
\end{equation} 
We can now stipulate conditions for the interaction potential. We will define these at this stage and leave the related theorems for later.

\subsection{Microscopic Conditions on the Pair Potential}
\label{sub:conditions}

\begin{definition}[Stability]\label{def:stability}
  The potential $\phi$ is \emph{stable} if there exists $B\ge 0$ such that
  \begin{equation}
    \sum_{1\le k<l\le N}\phi(q_k-q_l) \ge -BN
    \qquad
    \forall\,N\in\N,\;\forall\,(q_1,\dots,q_N)\in\Lam^N.
  \end{equation}
\end{definition}

\begin{definition}[Temperedness]\label{def:tempered}
  The potential $\phi$ is \emph{tempered} if there exist $R_0>0$ and $\varepsilon>0$ such
  that
  \begin{equation}
    \abs{\phi(r)} \le \abs{r}^{-(d+\varepsilon)}
    \qquad\text{for all }\abs{r}>R_0.
  \end{equation}
\end{definition}

\begin{definition}[Exponential cluster decomposition]\label{def:cluster}
  The Gibbs state \eqref{eq:gibbs} satisfies \emph{exponential cluster decomposition} with
  correlation length $\xi>0$ if the integrated pair Ursell function obeys
  \begin{equation}\label{eq:cluster_bound}
    \int_{\R^d} \abs{\hat{\utwo}(x,y)}\dd y
    \le C\,e^{-\abs{x}/\xi}
    \qquad\text{for all }x, y\in\R^d,
  \end{equation}
  where $\hat{\utwo}(x,y):=\int_{(\R^d)^2}\abs{\utwo((x,p),(y,q))}\dd p\,\dd q$
  is the momentum-integrated modulus of $\utwo$, and similarly for all $s$-body Ursell
  functions:
  \begin{equation}\label{eq:cluster_higher}
    \int_{\R^{d(s-1)}} \abs{\hat{\us}(x_1,\dots,x_s)}\dd x_2\cdots\dd x_s
    \le C^s\,e^{-\abs{x_1}/\xi}
    \qquad\forall\,s\ge 2.
  \end{equation}
\end{definition}

\begin{remark}\label{rem:cluster}
  Stability and temperedness are sufficient for the existence of the thermodynamic limit~\cite{Ruelle1969,Fisher1964}; exponential cluster decomposition
  \eqref{eq:cluster_bound}--\eqref{eq:cluster_higher} holds in the high-temperature or low-density regime via the Mayer cluster expansion~\cite{Ruelle1969,LebowitzPenrose1966,Georgii1988}, and more generally whenever the system is in a unique Gibbs phase \cite{Dobrushin1968, Dobrushin1985,Dobrushin1987}: the Dobrushin--Shlosman criterion \cite{Dobrushin1985} provides an explicit sufficient condition for uniqueness that directly implies the bound (\ref{eq:cluster_bound}) with a computable $\xi$.  The hypotheses of the main theorem are therefore satisfied throughout the one-phase region of the phase diagram.
\end{remark}

\subsection{The Thermodynamic Limit}
\label{sub:tl}

\begin{theorem}[Ruelle--Fisher thermodynamic limit {\cite{Ruelle1969,Fisher1964}}]
\label{thm:tl}
  Let $\phi$ be stable and tempered.  For any van Hove sequence of domains $\Lam_n\nearrow\R^d$
  with $N_n/\abs{\Lam_n}\to\rho_0\in(0,\infty)$, the specific free energy
  \begin{equation}
    f(\beta,\rho_0)
    := \lim_{n\to\infty}
       \frac{-1}{\beta N_n}\ln Z_{N_n}(\Lam_n,\beta)
  \end{equation}
  exists, is finite, is independent of boundary conditions, and is a concave function of
  $(\rho_0,-\beta)$.  Consequently,
  $F(N,V,T) = N\,f(T,N/V) + O(\abs{\partial\Lam})$,
  so $F$ is homogeneous of degree one in $(N,V)$ in the thermodynamic limit, and
  the entropy $S = -\partial F/\partial T$ satisfies
  $S = N\,s(T,\rho_0) + O(\abs{\partial\Lam})$
  for a well-defined specific entropy $s(T,\rho_0)$.
\end{theorem}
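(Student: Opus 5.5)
The plan follows the classical Ruelle--Fisher route. First I separate the kinetic part, then prove existence of the limit by a subadditivity argument on cubes, and finally pass to general van Hove domains and read off the surface corrections. The momentum integral factorises exactly:
\[
Z_N(\Lam,\beta)=\frac{1}{N!}\,\bigl(2\pi m/\beta\bigr)^{dN/2}\,Q_N(\Lam,\beta),
\qquad
Q_N=\int_{\Lam^N}e^{-\beta U_N}\dd q,
\]
where I include the Gibbs factor $1/N!$. Without it no extensive limit can exist, so I read the normalisation in \eqref{eq:gibbs} as including it. It therefore suffices to treat $\ln Q_N/N$.

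\textbf{Step 1: a priori bound.} Stability (Definition~\ref{def:stability}) gives $U_N\ge -BN$. Hence $Q_N\le \abs{\Lam}^N e^{\beta BN}$, and by Stirling $N^{-1}\ln(Q_N/N!)\le \ln(e/\rho)+\beta B$, uniformly in $\Lam$. Convexity of $\beta\mapsto\ln Q_N$ follows from H\"older's inequality. Convexity in $\rho$ is obtained later from the splitting argument below.

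\textbf{Step 2: subadditivity on cubes.} Take cubes $\Lam_k$ of side $2^k L_0$ with $N_k=\rho_0\abs{\Lam_k}$, and separate adjacent subcubes by corridors of width $R_0$. I split $\Lam_{k+1}$ into $2^d$ copies of $\Lam_k$ shrunk by corridors of width $R_0$. Then I restrict the integral defining $Q_{N_{k+1}}$ to configurations with exactly $N_k$ particles in each shrunk subcube. For particles in distinct subcubes, temperedness (Definition~\ref{def:tempered}) bounds the interaction energy:
\[
W\le \sum_{\text{pairs}}\abs{q-q'}^{-(d+\varepsilon)} .
\]
For two cubes of side $\ell$ at density $\rho_0$, this is $O(\ell^{d-\varepsilon'})$, i.e.\ $o(\ell^d)$, using the standard dyadic sum over distances. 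This yields
\[
\frac{\ln Q_{N_{k+1}}}{N_{k+1}}\ge \frac{\ln Q_{N_k}}{N_k}-c\,2^{-k\varepsilon'}-(\text{corridor loss}).
\]
The combinatorial factor from distributing labelled particles exactly cancels the $N!$ normalisation. The corridor loss is $O(2^{-k})$ relative, since corridors have volume fraction $O(R_0/\ell)$, and it is handled by the shrunk-cube bookkeeping. The sequence is therefore almost monotone and bounded above by Step 1, so it converges, which gives $f(\beta,\rho_0)$. Concavity in $(\rho_0,-\beta)$ follows from the same splitting with unequal densities in the two halves, combined with convexity in $\beta$.

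\textbf{Step 3: general van Hove domains and boundary conditions.} I would pack a van Hove $\Lam_n$ from inside by cubes of a fixed large side. The van Hove property makes the uncovered volume $o(\abs{\Lam_n})$, which gives the lower bound. The upper bound comes from embedding $\Lam_n$ in a union of cubes, using $Q_N$ monotone in $\Lam$, or a free-energy comparison in the complement. Independence of boundary conditions holds because any boundary field again interacts only through a tempered tail, with total contribution $O(\abs{\partial\Lam})$ plus a decaying part. The statements $F=Nf+O(\abs{\partial\Lam})$ and $S=Ns+O(\abs{\partial\Lam})$ then follow by tracking the corridor and interaction errors, which scale with the surface. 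For $S$, I differentiate in $T$, which is legitimate since convex functions converging pointwise have converging derivatives almost everywhere; where $s$ is differentiable this gives $S=-\partial F/\partial T$.

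\textbf{Main obstacle.} I expect the hardest step to be the claim of an $O(\abs{\partial\Lam})$ remainder, rather than merely $o(\abs{\Lam})$. Temperedness with a small $\varepsilon$ only gives cross-interactions of order $\ell^{d-\varepsilon}$, which is not surface order when $\varepsilon<1$. I would first try to prove the sharp surface bound assuming $\varepsilon>1$, or a finite-range truncation plus Definition~\ref{def:cluster}. Otherwise I would state the correction as $o(\abs{\Lam})$ in general, and reserve $O(\abs{\partial\Lam})$ for that case.
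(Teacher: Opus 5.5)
The paper gives no proof of this theorem; it defers to Ruelle and Fisher. Your scheme (kinetic factorisation, stability bound, cube subadditivity, van Hove packing) is the route of those references. You are also right that $Z_N$ must carry the factor $1/N!$, since otherwise $-\frac{1}{\beta N}\ln Z_N$ diverges.

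\textbf{Step~2 fails for a merely stable potential.} The estimate $W=O(\ell^{d-\varepsilon'})$ holds only for configurations of bounded local occupation. Without a hard core, nothing prevents the $N_k$ particles of two subcubes from clustering on opposite sides of a corridor. With corridors of fixed width $R_0$, the only configuration-uniform bound is therefore $W\le N_k^2R_0^{-(d+\varepsilon)}$. This is superextensive, so $e^{-\beta W}$ cannot be pulled out of the integral. The same problem affects your boundary-condition comparison.

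A standard repair is to let the corridors grow, $h_k=L_k^{\gamma}$ with $d/(d+\varepsilon)<\gamma<1$, and to take $L_{k+1}=2L_k+h_k$. Then the $2^d$ subcubes are exact translates of $\Lam_k$. The crude bound costs $O(L_k^{d-\gamma(d+\varepsilon)})$ per particle, which is summable. The density drifts only by the convergent factor $\prod_k(1+h_k/(2L_k))^{-d}$, which is absorbed afterwards through concavity in $\rho$.

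This also fixes your shrunk-cube step, which points the wrong way. $Q_N(\Lam)$ is increasing in $\Lam$, so a shrunk copy $\Lam_k'\subset\Lam_k$ only gives $Q_{N_k}(\Lam_k')\le Q_{N_k}(\Lam_k)$, and the recursion does not close. Your claimed $O(2^{-k})$ corridor loss presupposes a continuity in the density that you have not yet proved.

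\textbf{The consequences drawn after the limit do not hold as stated.}
\begin{enumerate}[label=(\roman*)]
  \item \emph{Concavity.} Splitting with unequal densities gives concavity in $\rho$ of the per-volume quantity $\lim\abs{\Lam}^{-1}\ln(Q_N/N!)$, and H\"older gives convexity in $\beta$. But temperature cannot be split between subcubes, and separate concavity does not give joint concavity. Worse, the per-particle $f$ is in general not concave in $\rho_0$. For hard rods of length $a$ in $d=1$ (stable and tempered), $\beta f=\ln\rho_0-\ln(1-a\rho_0)+\mathrm{const}$, which is convex for $\rho_0>1/(2a)$. That clause of the statement must be corrected, not proved.
  \item \emph{Entropy.} Pointwise convergence of convex functions gives $S/N\to s$ at temperatures where $f$ is differentiable, but no rate. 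An $O(\abs{\partial\Lam})$ remainder in $F$ cannot be differentiated in $T$. A difference-quotient argument gives at best $O(\sqrt{N\abs{\partial\Lam}})$.
  \item \emph{Surface order.} Your stated main obstacle is a correct diagnosis of the statement, not a defect of your plan. For tempered tails and bounded-density configurations, the cross-interaction across an interface is of surface order only when $\varepsilon>1$. In $d=1$ even the ideal gas carries the Stirling term $\tfrac12\ln(2\pi N)$. So $F=Nf+O(\abs{\partial\Lam})$ and $S=Ns+O(\abs{\partial\Lam})$ do not follow from stability and temperedness alone.
\end{enumerate}

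Once Step~2 is repaired, your argument proves existence of $f$, $F=Nf+o(N)$, and $S/N\to s$ at points of differentiability.
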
 We state this theorem without proof; the proof may be found in the cited references.

\section{The Combined Coarse-Graining Operator}
\label{sec:cg}

\subsection{Partition of Single-Particle Phase Space}
\label{sub:partition}
We work entirely on the \emph{single-particle phase space} $\Ms=\Lam\times\R^d$.
Partition $\Lam$ into disjoint spatial cells $\{V_i\}_{i\in\mathcal{I}}$ of volume
$\abs{V_i}=\ell^d$, and partition $\R^d$ into momentum cells $\{\Pi_\alpha\}_{\alpha\in\mathcal{A}}$
of volume $\abs{\Pi_\alpha}=:\nu_\alpha$.  The product cells
\begin{equation}
  C_{i,\alpha} := V_i\times\Pi_\alpha \subset \Ms
\end{equation}
partition $\Ms$.  Throughout, the spatial cell diameter $\ell$ satisfies the scale
separation condition
\begin{equation}\label{eq:scale_sep}
  \xi \ll \ell \ll L := \mathrm{diam}(\Lam),
\end{equation}
which ensures that each spatial cell contains many particles (since $\ell^d\rho_0\gg 1$)
while remaining small on macroscopic scales (so spatially varying fields are resolved).
The condition $\ell\gg\xi$ is used explicitly in the proof of
\cref{prop:extensivity} to justify the exponential decay of inter-cell correlations
before the next cell boundary is reached.

\subsection{The Combined Coarse-Graining Operator}
\label{sub:cg_def}

\begin{definition}[Combined coarse-graining operator]\label{def:cg}
  The \emph{combined coarse-graining operator} $\CGM$ maps the one-particle reduced density
  $\fone\in L^1(\Ms)$ to a piecewise-constant function:
  \begin{equation}\label{eq:cg_op}
    (\CGM\fone)(z)
    := \sum_{i,\alpha}
       \bar{f}_{i,\alpha}\,\Ind{C_{i,\alpha}}(z),
    \qquad
    \bar{f}_{i,\alpha}
    := \frac{1}{\abs{C_{i,\alpha}}}
       \int_{C_{i,\alpha}}\fone(z')\dd z'.
  \end{equation}
\end{definition}

This operator is a projection ($\CGM^2=\CGM$) and a contraction on $L^1(\norm{\CGM\fone}_{L^1}\le\norm{\fone}_{L^1}=N)$.  Its constituent operators
\begin{equation}
  (\CGx g)(x,p) := \sum_i\!\left(\frac{1}{\abs{V_i}}\int_{V_i}g(x',p)\dd x'\right)\Ind{V_i}(x),
  \quad
  (\CGp g)(x,p) := \sum_\alpha\!\left(\frac{1}{\nu_\alpha}\int_{\Pi_\alpha}g(x,p')\dd p'\right)\Ind{\Pi_\alpha}(p),
\end{equation}
satisfy $\CGM = \CGx\circ\CGp = \CGp\circ\CGx$ by Fubini's theorem (which tells us that for measurable functions, on a product of measure space, if the integral of the absolute value is finite, then the order of integration does not matter), since the $x$ and $p$ integrations act on independent variables\cite{Billingsley1995}.  However, as we show next, this commutativity
of the \emph{linear} operator breaks down as soon as one applies it to \emph{nonlinear}
thermodynamic functionals.

\subsection{Non-Commutativity of Nonlinear Thermodynamic Functionals}
\label{sub:noncomm}

For linear observables $A \in L^1(\Ms)$, Fubini's theorem gives
$\CGx(\CGp A) = \CGp(\CGx A)$.  The situation is fundamentally different for nonlinear
thermodynamic functionals such as the entropy density $\eta(\rho) := -\kb\rho\ln\rho$ or
the free-energy density $\psi(\rho)$.  Define the spatial average of a functional $\Psi$
of $\fone$ over cell $V_i$ as
\begin{equation}\label{eq:spatial_avg}
  \langle\Psi[\fone]\rangle_i
  := \frac{1}{\abs{V_i}}\int_{V_i}\Psi[\fone(x,\cdot)]\dd x.
\end{equation}
The ordering problem is then whether
$\Psi[\langle\fone\rangle_i] = \langle\Psi[\fone]\rangle_i$,
i.e., whether spatial averaging commutes with the nonlinear operation $\Psi$.

\begin{proposition}[Non-commutativity of spatial averaging with nonlinear functionals]
\label{prop:noncomm}
  Let $\Psi:\R_{\ge 0}\to\R$ be strictly convex.  Define
  $\bar\rho_i(p) = \abs{V_i}^{-1}\int_{V_i}\fone(x,p)\dd x$ (the spatially averaged density
  at momentum $p$, i.e., the coarse-grained density in $V_i$).  Then
  \begin{equation}\label{eq:jensen_ineq}
    \Psi\!\left[\bar\rho_i(p)\right]
    \le \frac{1}{\abs{V_i}}\int_{V_i}\Psi[\fone(x,p)]\dd x,
  \end{equation}
  with equality if and only if $x\mapsto\fone(x,p)$ is constant on $V_i$.
  In particular, let $\rho>0$, for the entropy functional $\Psi(\rho) = -\kb\rho\ln\rho$ (concave), the inequality reverses:
  \begin{equation}\label{eq:entropy_noncomm}
    -\kb\bar\rho_i(p)\ln\bar\rho_i(p)
    \ge \frac{1}{\abs{V_i}}\int_{V_i}(-\kb\fone(x,p)\ln\fone(x,p))\dd x,
  \end{equation}
  i.e., the entropy of the spatially averaged density exceeds the spatial average of the
  local entropy density.
\end{proposition}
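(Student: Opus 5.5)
The plan is to apply Jensen's inequality for the uniform probability measure $\mu_i(\dd x) := \abs{V_i}^{-1}\Ind{V_i}(x)\dd x$ on the cell $V_i$, with $p$ held fixed throughout.

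First, I will fix $p\in\R^d$ for which $x\mapsto\fone(x,p)$ is integrable on $V_i$. By Fubini's theorem (already invoked for $\CGx\circ\CGp=\CGp\circ\CGx$), this holds for almost every $p$, since $\fone\in L^1(\Ms)$. For such $p$, $\bar\rho_i(p)=\int\fone(x,p)\,\mu_i(\dd x)$ is the mean of the nonnegative random variable $X:=\fone(\cdot,p)$ under $\mu_i$.

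Second, I will prove \eqref{eq:jensen_ineq} via a supporting line. Convexity of $\Psi$ on $\R_{\ge0}$ gives a subgradient $c$ at $m:=\bar\rho_i(p)$ with
\[
\Psi(y)\ge\Psi(m)+c(y-m)\qquad\text{for all } y\ge0.
\]
If $m=0$, then $X=0$ a.e. and the statement is trivial, so I take $m>0$, which lies in the interior. Setting $y=X(x)$ and integrating against $\mu_i$ kills the linear term and yields \eqref{eq:jensen_ineq}. The right-hand side may be $+\infty$; the negative part of $\Psi(X)$ is controlled by the affine minorant, so the integral is well defined.

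Third, I will establish the equality case. If equality holds, then the nonnegative function $\Psi(X)-\Psi(m)-c(X-m)$ has zero integral, so it vanishes $\mu_i$-a.e. Strict convexity means the supporting line touches the graph only at $y=m$, so $X=m$ a.e. Conversely, if $X$ is constant, equality is immediate. I will state "constant" in the a.e. sense, since that is the only sense in which an $L^1$ density is determined.

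Finally, I will derive \eqref{eq:entropy_noncomm}. I apply the convex case to $-\Psi_{\mathrm{ent}}=\kb\rho\ln\rho$, extended by $0\ln0=0$, which is continuous and strictly convex on $\R_{\ge0}$ because its second derivative $\kb/\rho$ is positive. Multiplying the resulting inequality by $-1$ reverses it. Here the right-hand side is bounded above by the affine bound, so nothing is infinite in the wrong direction.

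The only genuine subtlety is measure-theoretic: the "equality iff constant" claim holds only $\mu_i$-a.e. in $x$ and for a.e. $p$, and integrability of $\Psi(\fone(\cdot,p))$ is not assumed. I will handle both points with the affine-minorant argument above rather than quoting Jensen as a black box.
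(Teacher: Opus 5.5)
Your proposal is correct and follows the same route as the paper: the paper invokes Jensen's inequality for the normalised Lebesgue measure $\abs{V_i}^{-1}\dd x$ on $V_i$, then reverses the inequality for the concave function $-\kb\rho\ln\rho$. The only difference is that you prove Jensen directly via a supporting line, and you make explicit the measure-theoretic points the paper leaves implicit: a.e.\ $p$, a.e.\ constancy in the equality case, and a possibly infinite right-hand side.
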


\begin{proof}
  Direct application of Jensen's inequality: since $\Psi$ is convex and the Lebesgue
  measure on $V_i$ is a probability measure when normalised by $\abs{V_i}$,
  \[
    \Psi\!\left(\frac{1}{\abs{V_i}}\int_{V_i}\fone(x,p)\dd x\right)
    \le \frac{1}{\abs{V_i}}\int_{V_i}\Psi(\fone(x,p))\dd x.
  \]
  Equality holds iff $\fone(\cdot,p)$ is constant on $V_i$.  For $\Psi(\rho) = -\kb\rho\ln\rho$, the function is concave, so the inequality reverses.\end{proof}

\begin{remark}[Connection to the Buchert--Ostermann problem]
  Inequality~\eqref{eq:entropy_noncomm} is a special case of the Buchert--Ostermann commutation rule~\cite{Buchert2000,Rasanen2006}: for any nonlinear scalar functional
  $\mathcal{Q}$ of the metric or matter fields, the spatial average $\langle\mathcal{Q}\rangle$ differs from $\mathcal{Q}(\langle\text{fields}\rangle)$ by a \emph{backreaction} term.
  In the thermodynamic context, this backreaction is exactly the difference between the fine-grained and coarse-grained entropies.  \cref{prop:noncomm} therefore places the
  Buchert problem in the broader setting of any coarse-grained statistical mechanics.
\end{remark}

\subsection{Mesoscopic Probabilities and Coarse-Grained Entropy}
\label{sub:entropy_def}

\begin{definition}[Mesoscopic probability]\label{def:meso_prob}
  The \emph{mesoscopic probability} associated with cell $C_{i,\alpha}$ is the
  dimensionless quantity
  \begin{equation}\label{eq:meso_prob}
    \pi_{i,\alpha}
    := \frac{\abs{C_{i,\alpha}}}{N}
       \bar{f}_{i,\alpha}
    = \frac{1}{N}\int_{C_{i,\alpha}}\fone(z)\dd z.
  \end{equation}
\end{definition}

\begin{lemma}[Normalisation]\label{lem:norm}
  The mesoscopic probabilities satisfy $\pi_{i,\alpha}\ge 0$ and
  $\sum_{i,\alpha}\pi_{i,\alpha}=1$.
\end{lemma}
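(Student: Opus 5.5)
The plan is to read both claims straight off Definition~\ref{def:meso_prob}, using only Definition~\ref{def:reduced} and the fact that the cells $C_{i,\alpha}$ partition $\Ms$.

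\emph{Nonnegativity.} By Definition~\ref{def:reduced}, $\fone$ is a marginal of the Gibbs density $\varrho_N$. That density is nonnegative because it is an exponential divided by a positive partition function; positivity and finiteness of $Z_N$ follow from stability (Definition~\ref{def:stability}) together with the Gaussian momentum integrals. Hence $\fone\ge 0$ almost everywhere. It follows that $\pi_{i,\alpha}=N^{-1}\int_{C_{i,\alpha}}\fone\dd z\ge 0$, since $N\ge 1$.

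\emph{Normalisation.} The cells $C_{i,\alpha}=V_i\times\Pi_\alpha$ are pairwise disjoint, because the $V_i$ partition $\Lam$ and the $\Pi_\alpha$ partition $\R^d$. Their union is all of $\Ms$. This gives the pointwise identity
\[
  \sum_{i,\alpha}\Ind{C_{i,\alpha}}(z)=1 \qquad\text{for a.e. } z\in\Ms .
\]
Since $\fone\ge 0$, Tonelli's theorem (equivalently, monotone convergence applied to the countable sum over $\alpha$) lets us interchange the sum and the integral:
\[
  \sum_{i,\alpha}\pi_{i,\alpha}
  =\frac1N\int_\Ms \fone(z)\sum_{i,\alpha}\Ind{C_{i,\alpha}}(z)\dd z
  =\frac1N\int_\Ms\fone(z)\dd z .
\]
By \eqref{eq:reduced_norm} with $s=1$, the last integral equals $N$, so the sum equals $1$.

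\emph{Where care is needed.} The only step that requires any justification is the interchange of the sum and the integral. The momentum partition $\{\Pi_\alpha\}$ is necessarily infinite, because $\R^d$ cannot be covered by finitely many cells of finite volume, so the sum over $\alpha$ is countable rather than finite. Nonnegativity of $\fone$ makes Tonelli applicable without any further integrability check. I would also note in passing that the partition should be taken to be measurable and countable, which is implicit in the construction of \cref{sub:partition}.
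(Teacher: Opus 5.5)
Your proof is correct and follows the paper's own argument. Nonnegativity comes from $\fone\ge 0$, and the sum reduces to $N^{-1}\int_\Ms\fone\dd z=1$ because the cells $C_{i,\alpha}$ partition $\Ms$; your Tonelli step for the countable sum over $\alpha$ only makes explicit an interchange the paper performs silently.

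One small aside on well-definedness. Stability gives finiteness of $Z_N$, via $e^{-\beta\sum\phi}\le e^{\beta BN}$ on the bounded set $\Lam^N$, but it does not give positivity. Positivity needs $H_N<\infty$ on a set of positive measure, which can fail, e.g., for hard cores above close packing. It is really part of the standing assumption that \eqref{eq:gibbs} is well defined.
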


\begin{proof}
  Non-negativity is immediate from $\fone\ge 0$.  For the sum:
 $$\sum_{i,\alpha}\pi_{i,\alpha} = N^{-1}\sum_{i,\alpha}\int_{C_{i,\alpha}}\fone\dd z
  = N^{-1}\int_\Ms\fone\dd z = N^{-1}\cdot N = 1$$.
\end{proof}

\begin{definition}[Coarse-grained entropy]\label{def:cg_entropy}
  The \emph{coarse-grained Boltzmann entropy} is the Shannon entropy of the mesoscopic
  probability distribution\cite{Jaynes1957a,Jaynes1957b}:
  \begin{equation}\label{eq:cg_S}
    S_{\mathrm{CG}}[\fone]
    := -\kb\sum_{i,\alpha}\pi_{i,\alpha}\ln\pi_{i,\alpha}.
  \end{equation}
  This is well-defined (dimensionless logarithm) and satisfies $S_{\mathrm{CG}}\ge 0$, with equality only if all mass is concentrated on a single cell.
\end{definition}

\begin{remark}
  By Jensen's inequality applied to $-t\ln t$ (concave), the coarse-grained entropy
  \eqref{eq:cg_S} is always at least as large as the Gibbs fine-grained entropy
  $$S_{\mathrm{fine}} = -\kb\int_\Ms\fone\ln(\fone/N)\,\dd z/N,$$ reflecting the loss of
  information through coarse-graining \cite{Jaynes1957a,Jaynes1957b}.
\end{remark}

\subsection{Two-Cell Joint Probability and Factorisation}
\label{sub:joint}

\begin{definition}[Two-cell joint probability]\label{def:joint}
  The \emph{joint mesoscopic probability} for a particle to occupy cell $C_{i,\alpha}$
  while a \emph{second, distinct} particle occupies cell $C_{j,\beta}$ is defined via
  the two-particle reduced density:
  \begin{equation}\label{eq:joint_prob}
    \pi_{(i,\alpha)(j,\beta)}
    := \frac{\abs{C_{i,\alpha}}\,\abs{C_{j,\beta}}}{N(N-1)}
       \int_{C_{i,\alpha}}\int_{C_{j,\beta}}
       \ftwo(z_1,z_2)\dd z_1\dd z_2,
    \qquad i\ne j.
  \end{equation}
  This is well-defined because $\ftwo$ describes \emph{pairs} of distinct particles; the
  factor $N(N-1)$ in the denominator ensures $$\sum_{(i\ne j),\alpha,\beta}\pi_{(i,\alpha)(j,\beta)}=1.$$
\end{definition}

\begin{remark}
  Definition~\ref{def:joint} uses $\ftwo$ defined on $\Ms\times\Ms$ (two distinct
  single-particle phase-space points), \emph{not} on $\GN$.  
\end{remark}

\section{Extensivity, Factorisation, and the Euler Relation}
\label{sec:extensivity}

\subsection{Main Proposition}
\label{sub:main}

\begin{proposition}[Extensivity as emergent factorisation]
\label{prop:extensivity}
  Let $\phi$ be stable and tempered (\cref{def:stability,def:tempered}) and let the
  canonical Gibbs state \eqref{eq:gibbs} satisfy exponential cluster decomposition
  (\cref{def:cluster}) with correlation length $\xi$.  Let $\ell$ satisfy the scale
  separation condition \eqref{eq:scale_sep}.

  \begin{enumerate}[label=(\roman*)]
    \item \textbf{Pair factorisation.}
      For non-adjacent cells ($i\ne j$, with $\abs{x_i-x_j}\gg\xi$ where $x_i,x_j$ are
      cell centres), the joint probability~\eqref{eq:joint_prob} satisfies
      \begin{equation}\label{eq:pair_fact}
        \abs{\pi_{(i,\alpha)(j,\beta)} - \pi_{i,\alpha}\,\pi_{j,\beta}}
        \le \frac{C}{N}\,e^{-\abs{x_i-x_j}/\xi}.
      \end{equation}
    \item \textbf{Entropy additivity.}
      The coarse-grained entropy satisfies
      \begin{equation}\label{eq:entropy_add}
        S_{\mathrm{CG}} = \sum_i S_i
          - \kb\sum_{s=2}^{\infty}(-1)^s\sum_{i_1<\cdots<i_s} I_s(i_1,\dots,i_s),
      \end{equation}
      where $S_i = -\kb\sum_\alpha\pi_{i,\alpha}\ln\pi_{i,\alpha}$ is the marginal entropy
      of cell $V_i$, and $I_s$ is the $s$-cell multi-information.  Under exponential
      cluster decomposition \eqref{eq:cluster_higher}, the total correction obeys
      \begin{equation}\label{eq:correction_bound}
     \abs{ {S_{\mathrm{CG}} - \sum_i S_i}}
        \le \kb\,\frac{C_1\abs{\Lam}}{\ell^d}\,e^{-\ell/\xi},
      \end{equation}
      where $C_1>0$ depends on $d$, $\xi$, and the constants in \cref{def:cluster} but
      not on $\abs{\Lam}$ or $\ell$.

    \item \textbf{Converse.}
      If $\phi$ is long-range with $\abs{\phi(r)}\sim r^{-s_0}$ for $s_0\le d$, so that
      temperedness fails, then correlations decay at most algebraically, and there exists
      constants $c>0$ and cells $i\ne j$ such that
      $\abs{\pi_{(i,\alpha)(j,\beta)}-\pi_{i,\alpha}\pi_{j,\beta}}\ge c\abs{x_i-x_j}^{-s_0}$,
      the mutual information $I(i,j)$ does not tend to zero for $\abs{x_i-x_j}\to\infty$,
      and $S_{\mathrm{CG}}\ne\sum_i S_i$.
  \end{enumerate}
\end{proposition}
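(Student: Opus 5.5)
We treat the three parts separately, each reduced to the Ursell decomposition \eqref{eq:ursell2} and the cluster bounds of \cref{def:cluster}.

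\textbf{Part (i).} We insert \eqref{eq:ursell2} into \eqref{eq:joint_prob}. The product term $\fone(z_1)\fone(z_2)$ integrates to $N^2\pi_{i,\alpha}\pi_{j,\beta}$. Dividing by $N(N-1)$ gives $\pi_{i,\alpha}\pi_{j,\beta}(1+O(1/N))$, and the $O(1/N)$ discrepancy is absorbed into the constant. For this to work we read the cell-volume prefactor in \eqref{eq:joint_prob} as the normalisation that makes $\pi_{(i,\alpha)(j,\beta)}$ a probability, consistently with \cref{def:meso_prob}.

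The remaining term is
\[
\frac{1}{N(N-1)}\int_{C_{i,\alpha}}\int_{C_{j,\beta}}\utwo .
\]
We bound its modulus by dropping the momentum restriction, which gives $\hat\utwo$. We then integrate over $x\in V_i$ and $y\in V_j$. Translating coordinates so that the reference point sits at $x_i$, \eqref{eq:cluster_bound} gives the factor $e^{-\mathrm{dist}(V_i,V_j)/\xi}$. Since $\mathrm{dist}(V_i,V_j)\ge\abs{x_i-x_j}-\ell$, the factor $e^{\ell/\xi}$ and the volume $\ell^d$ are absorbed into $C$. The extra $1/N$ then comes from $\int\hat\utwo=O(N)$ against the $N^2$ normalisation, which yields \eqref{eq:pair_fact}.

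\textbf{Part (ii).} The identity \eqref{eq:entropy_add} is the Möbius-inversion (inclusion--exclusion) expansion of the joint Shannon entropy of the cell-occupation variables into marginal entropies and multi-informations. We would state it for the joint law of the cells, with $I_s$ defined as the $s$-th order interaction information.

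For the bound, we control each $I_s$ by the total variation distance between the joint $s$-cell law and the product of its marginals. We use a Pinsker-type or continuity estimate for entropy ($\abs{\Delta H}\le \delta\ln(K/\delta)+\dots$) with $\delta$ given by the integrated $s$-body Ursell function. By \eqref{eq:cluster_higher}, the connected contribution linking $s$ distinct cells is at most $C^s e^{-(s-1)\ell/\xi}$, because every cluster spanning distinct cells must cross at least one cell boundary of width $\sim\ell$. Summing over $i_1$ gives a factor $\abs{\Lam}/\ell^d$. Summing over the remaining cells reachable within the cluster is controlled by the exponential weight, leaving a geometric series in $Ce^{-\ell/\xi}$. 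This series converges for $\ell\gg\xi$ by \eqref{eq:scale_sep}, and its leading term $s=2$ gives \eqref{eq:correction_bound}.

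\textbf{Main obstacle.} The hard step is converting the $L^1$ smallness of the Ursell functions into smallness of the multi-informations uniformly in the number of momentum cells. The logarithmic factor in entropy continuity depends on the alphabet size. We would first try to avoid it by using relative entropy directly: $I_s$ is a Kullback--Leibler divergence, which we bound by a $\chi^2$-type quantity, namely $\sum(\text{joint}-\text{product})^2/\text{product}$, and then use the Ursell bound pointwise in cell indices. This requires a lower bound on the products $\pi_{i,\alpha}$, which can fail in the Maxwellian tails. We would handle this by truncating momenta at $\abs{p}\sim\sqrt{\ln N/\beta}$ and bounding the tail mass separately.

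\textbf{Part (iii).} For the converse we argue at the level of second-order perturbation (a Debye/mean-field computation). When $s_0\le d$, the leading pair correlation is $\utwo\approx-\beta\phi\,\fone\fone$. Integrating over the two cells then gives the lower bound $c\abs{x_i-x_j}^{-s_0}$. A lower bound on mutual information via Pinsker, $I\ge\tfrac12\|\cdot\|_1^2$, together with the fact that the number of cell pairs at distance $r$ grows like $r^{d-1}$, shows that the summed correction does not vanish. This gives $\SCG\ne\sum_iS_i$.

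This part is genuinely heuristic: non-tempered potentials may not even admit a thermodynamic limit without rescaling. We would therefore present it under an explicit assumption that the weak-coupling expansion is valid.
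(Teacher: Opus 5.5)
The main problem is that your plan for (ii) and (iii) is aimed at a quantity that does not appear in the statement. With $S_i=-\kb\sum_\alpha\pi_{i,\alpha}\ln\pi_{i,\alpha}$ as given, $\sum_i S_i=-\kb\sum_{i,\alpha}\pi_{i,\alpha}\ln\pi_{i,\alpha}=S_{\mathrm{CG}}$ holds identically, for every $\fone$.

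\begin{itemize}
\item $S_{\mathrm{CG}}$ is the entropy of a single particle's cell label $(i,\alpha)$, a functional of $\fone$ alone. The $S_i$ are partial sums of it, not marginal entropies of a joint law of cell-content variables.
\item So \eqref{eq:correction_bound} is trivially true: its left side is $0$, and no correlation estimate is needed. Your M\"obius/interaction-information expansion concerns a different object.
\item This is fatal for (iii). You infer $S_{\mathrm{CG}}\ne\sum_iS_i$ from non-vanishing inter-cell mutual information. But $I(i,j)$ is built from $\ftwo$, which cannot affect a functional of $\fone$, and under the stated definitions the conclusion is false.
\item The paper's proof makes the same identification through its chain-rule identity. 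You are right that such an expansion needs co-informations rather than the KL-type $I_s$ the paper defines, but that does not repair it.
\item Your Pinsker bound gives only $I(i,j)\gtrsim\abs{x_i-x_j}^{-2s_0}\to0$. So the claim that $I(i,j)$ does not tend to zero is left unaddressed.
\end{itemize}

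Even if you redefine $S_{\mathrm{CG}}$ as the joint entropy of cell-content variables, your key estimate in (ii) fails. A cell boundary has no width. For adjacent cells $\mathrm{dist}(V_i,V_j)=\abs{x_i-x_j}-\ell=0$, so your own part-(i) estimate yields only the factor $e^{0}$. Clusters of size $\sim\xi$ straddling the shared face couple the two cells at no exponential cost.

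Nearest-neighbour mutual information is then generically of order the area of the shared face, not $O(e^{-\ell/\xi})$. For example, in a one-dimensional nearest-neighbour Markov chain, the mutual information between two adjacent blocks equals that between the two boundary sites, independently of $\ell$. The summed correction is therefore an area-law term, and no Pinsker or $\chi^2$ refinement can give $O(\abs{\Lam}\ell^{-d}e^{-\ell/\xi})$. The paper's argument hides the same cancellation in $C''=Ce^{(s-1)\ell/\xi}$, and its $C_1=2C''/N$ depends on $\abs{\Lam}$.

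Part (i) has a separate gap, also shared with the paper. The leftover term $\pi_{i,\alpha}\pi_{j,\beta}/(N-1)$ from the prefactor $N^2/(N(N-1))$ does not decay in $\abs{x_i-x_j}$. It therefore cannot be absorbed into $CN^{-1}e^{-\abs{x_i-x_j}/\xi}$ with a separation-independent $C$. In the canonical ensemble this term is cancelled by a non-decaying part of $\utwo$: in fact $\int\utwo(z_1,z_2)\dd z_2=-\fone(z_1)$, not $0$. To obtain (i), you must impose the cluster hypothesis on $\utwo+\fone\otimes\fone/N$ instead.
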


We prove parts~(i) and~(ii) via two lemmas, after which part~(iii) follows from
standard results on non-tempered potentials~\cite{Ruelle1969}.

\subsection{Proof: Part (i) — Pair Factorisation}
\label{sub:proof_pair}

\begin{lemma}[Pair factorisation]\label{lem:pair}
  Under \cref{def:cluster}, for all $i\ne j$,
  $\abs{\pi_{(i,\alpha)(j,\beta)} - \pi_{i,\alpha}\pi_{j,\beta}} \le C N^{-1} e^{-\abs{x_i-x_j}/\xi}$.
\end{lemma}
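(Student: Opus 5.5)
The natural route is to insert the Ursell decomposition \eqref{eq:ursell2} into the definition \eqref{eq:joint_prob} and split the difference $\pi_{(i,\alpha)(j,\beta)}-\pi_{i,\alpha}\pi_{j,\beta}$ into a \emph{factorised} piece and a \emph{connected} piece. Writing $F_{i,\alpha}:=\int_{C_{i,\alpha}}\fone\dd z = N\pi_{i,\alpha}$, the factorised part contributes a multiple of $F_{i,\alpha}F_{j,\beta}/(N(N-1))$. Comparing this with $\pi_{i,\alpha}\pi_{j,\beta}=F_{i,\alpha}F_{j,\beta}/N^2$ leaves a mismatch of order $\pi_{i,\alpha}\pi_{j,\beta}/N$, since $N^2/(N(N-1))-1=1/(N-1)$. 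The connected piece is
\[
  \frac{1}{N(N-1)}\int_{C_{i,\alpha}}\int_{C_{j,\beta}}\utwo(z_1,z_2)\dd z_1\dd z_2 .
\]
I will read the cell-volume prefactor in \eqref{eq:joint_prob} as a normalisation consistent with the stated sum rule, i.e.\ work with the pair probability $N(N-1)^{-1}\int\int\ftwo$, since otherwise the sum rule fails. Making this convention explicit is the first step.

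For the connected piece, I bound $\abs{\utwo}$ and integrate out the momenta over $\Pi_\alpha\times\Pi_\beta\subset(\R^d)^2$. This gives at most
\[
  \int_{V_i}\int_{V_j}\hat{\utwo}(x,y)\dd x\dd y .
\]
Next, I use the cluster bound \eqref{eq:cluster_bound}, reading the exponent as the separation variable, i.e.\ as a translation-invariant decay in $\abs{x-y}$. Since every $x\in V_i$ and $y\in V_j$ satisfy $\abs{x-y}\ge\abs{x_i-x_j}-\ell$, integrating over $y$ and then over $x\in V_i$ yields $C\ell^d e^{-(\abs{x_i-x_j}-\ell)/\xi}$. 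Dividing by $N(N-1)$ and using $\ell^d/N\lesssim\rho_0^{-1}$ (from $N/\abs{\Lam}\to\rho_0$) gives a bound of the form $C' N^{-1}e^{-\abs{x_i-x_j}/\xi}$. The factor $e^{\ell/\xi}$ and the density are absorbed into $C'$, which is legitimate because $\ell$ is fixed relative to $\xi$ in the statement.

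The remaining issue is the $O(\pi_{i,\alpha}\pi_{j,\beta}/N)$ finite-$N$ normalisation mismatch, which is the main obstacle: it is not exponentially small in $\abs{x_i-x_j}$. I would handle it in one of two ways. The first is to use the sum rule $\int\utwo\dd z_2=0$ from \eqref{eq:ursell_int} in its canonical form, $\int\ftwo\dd z_2=(N-1)\fone$. This forces the canonical $\utwo$ to contain a $-\fone\fone/N$ piece, which exactly cancels the mismatch, so the whole difference becomes a connected integral and the previous bound applies. Failing that, I would bound the mismatch by $\pi_{i,\alpha}\pi_{j,\beta}/(N-1)$ and note that it is at most $CN^{-1}e^{-\abs{x_i-x_j}/\xi}$ whenever $\abs{x_i-x_j}\lesssim L$. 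This absorbs $e^{L/\xi}$ into the constant, which is honest only in a weaker sense. I would flag this explicitly as the place where the estimate is delicate.
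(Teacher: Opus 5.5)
Your treatment of the connected term follows the paper: substitute \eqref{eq:ursell2}, use $\abs{q_1-q_2}\ge\abs{x_i-x_j}-\ell$, apply the cluster bound, and absorb $e^{\ell/\xi}$ and $\ell^d/N$ into the constant. Your separation reading of \eqref{eq:cluster_bound} is also the one the paper tacitly uses. Where you differ is the factorised term, and there your route is the sounder one.

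The paper writes the factorised term as $\frac{N^2}{N(N-1)}\pi_{i,\alpha}\pi_{j,\beta}$, which is correct only without the cell-volume prefactor, i.e.\ under your reading of \eqref{eq:joint_prob}; note you mean $[N(N-1)]^{-1}$, not $N(N-1)^{-1}$. It then declares the $O(\pi_{i,\alpha}\pi_{j,\beta}/N)$ remainder ``absorbed into the bound below''. As you point out, a remainder that does not decay in $\abs{x_i-x_j}$ cannot be absorbed into $CN^{-1}e^{-\abs{x_i-x_j}/\xi}$ with a constant independent of $L$.

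Your option (a) is the right repair. Set $\tilde u:=\ftwo-(1-N^{-1})\fone\fone$. Then $\int\tilde u\,\dd z_2=0$ exactly, and $\pi_{(i,\alpha)(j,\beta)}-\pi_{i,\alpha}\pi_{j,\beta}=\frac{1}{N(N-1)}\int_{C_{i,\alpha}}\int_{C_{j,\beta}}\tilde u$. The connected estimate alone then finishes the proof with an $L$-independent constant.

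You should state explicitly that this places the hypothesis of \cref{def:cluster} on $\tilde u$. The paper's \eqref{eq:ursell2}, combined with $\int\utwo\,\dd z_2=0$, contradicts the canonical identity $\int\ftwo\,\dd z_2=(N-1)\fone$; only the zero-mean reading is consistent. The ideal gas of \cref{ex:ideal} shows this concretely. There $\ftwo=(1-N^{-1})\fone\fone$, so $\tilde u\equiv 0$, while $\ftwo-\fone\fone=-\fone\fone/N$ does not decay at all. This non-decaying piece is exactly what cancels the normalisation mismatch, and bounding the two terms separately, as the paper does, loses that cancellation.

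Drop option (b). An $e^{L/\xi}$ constant makes the estimate vacuous in the thermodynamic limit. It would also break the $\abs{\Lam}$-independence of constants needed in \cref{prop:extensivity}(ii).
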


\begin{proof}
  Substituting \eqref{eq:ursell2} into \eqref{eq:joint_prob}:
  \begin{align}
    \pi_{(i,\alpha)(j,\beta)}
    &= \frac{\abs{C_{i,\alpha}}\,\abs{C_{j,\beta}}}{N(N-1)}
       \int_{C_{i,\alpha}}\!\!\int_{C_{j,\beta}}
       \bigl[\fone(z_1)\fone(z_2) + \utwo(z_1,z_2)\bigr]\dd z_1\dd z_2 \notag\\
    &= \frac{N^2}{N(N-1)}\,\pi_{i,\alpha}\,\pi_{j,\beta}
       + \frac{\abs{C_{i,\alpha}}\,\abs{C_{j,\beta}}}{N(N-1)}
         \int_{C_{i,\alpha}}\!\!\int_{C_{j,\beta}}\utwo(z_1,z_2)\dd z_1\dd z_2.\label{eq:pair_split}
  \end{align}
  The prefactor $N^2/(N(N-1)) = 1 + O(N^{-1})$ introduces a correction
  $O(\pi_{i,\alpha}\pi_{j,\beta}/N) = O(N^{-1})$ that is absorbed into the bound below.

  For the Ursell correction, let $R_{i,\alpha;j,\beta}$ denote the double integral.
  Using $\ell \gg \xi$ and the fact that for $z_1\in C_{i,\alpha}$ and $z_2\in C_{j,\beta}$
  with $i\ne j$ one has $\abs{q_1-q_2}\ge\abs{x_i-x_j}-\ell$ (where $q_1\in V_i$, $q_2\in V_j$):
  \begin{align}
    \abs{R_{i,\alpha;j,\beta}}
    &\le \int_{C_{i,\alpha}}\dd z_1\int_{C_{j,\beta}}\abs{\utwo(z_1,z_2)}\dd z_2 \notag\\
    &\le \abs{C_{i,\alpha}}\int_{\R^d}\hat{\utwo}(q_1,q_2)\dd q_2
       \le \abs{C_{i,\alpha}}\,C\,e^{-(\abs{x_i-x_j}-\ell)/\xi}
       = \abs{C_{i,\alpha}}\,C\,e^{\ell/\xi}\,e^{-\abs{x_i-x_j}/\xi},
  \end{align}
  where $\hat{\utwo}(x,y)$ is defined in \cref{def:cluster}.  Therefore
  \begin{equation}
    \frac{\abs{C_{i,\alpha}}\abs{C_{j,\beta}}}{N(N-1)}\,\abs{R_{i,\alpha;j,\beta}}
    \le \frac{\abs{C_{i,\alpha}}^2\,C\,e^{\ell/\xi}}{N(N-1)}\,e^{-\abs{x_i-x_j}/\xi}
    \le \frac{C'}{N}\,e^{-\abs{x_i-x_j}/\xi},
  \end{equation}
  since $\abs{C_{i,\alpha}}^2/N^2 = O(N^{-2}\ell^{2d}) \le C''/N$ for
  $\ell^d\rho_0 = O(1)$\cite{CoverThomas2006, Han1978,Watanabe1960}.  This demonstrate \eqref{eq:pair_fact}.
\end{proof} $C'$ is not a derivative but an absorbed constant. It is not a new independent quantity — it collects several factors that are bounded and independent of the separation $\vert x_{i}-x_{j}\vert$ into a single new constant
\subsection{Proof: Part (ii) — Entropy Additivity to higher orders}
\label{sub:proof_entropy}

We first establish a bound on the $s$-cell multi-information, then sum over all $s$.

\begin{definition}[$s$-cell multi-information]\label{def:multiinfo}
  For a collection of cells $\mathbf{I} = \{(i_1,\alpha_1),\dots,(i_s,\alpha_s)\}$ with
  distinct $i_1,\dots,i_s$, the \emph{$s$-cell multi-information} is
  \begin{equation}
    I_s(i_1,\dots,i_s) := \sum_{\alpha_1,\dots,\alpha_s}
    \pi_{\mathbf{I}}\ln\frac{\pi_\mathbf{I}}{\pi_{i_1,\alpha_1}\cdots\pi_{i_s,\alpha_s}},
  \end{equation}
  where $\pi_\mathbf{I}$ is the $s$-cell joint probability defined analogously to
  \eqref{eq:joint_prob} via $f^{(s)}$.  Note $I_2(i,j) = I(i,j) \ge 0$ (mutual
  information), while $I_s$ for $s\ge 3$ can change sign.
\end{definition}

\begin{lemma}[$s$-cell correlation bound]\label{lem:higher}
  Under \cref{def:cluster}, for distinct cells $i_1,\dots,i_s$ with centres
  $x_{i_1},\dots,x_{i_s}$,
  \begin{equation}\label{eq:sbound}
    \abs{I_s(i_1,\dots,i_s)}
    \le \frac{(C'')^{s-1}}{N^{s-1}}\,
        \exp\!\left(-\frac{\mathrm{span}(x_{i_1},\dots,x_{i_s})}{\xi}\right),
  \end{equation}
  where $\mathrm{span}$ denotes the length of the minimum spanning tree on the cell centres
  and $C''>0$ depends on $d$, $\xi$, $\ell$, $B$ only.
\end{lemma}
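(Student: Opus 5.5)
The plan is to bound $\pi_{\mathbf I}$ by its deviation from the product $\pi_{i_1,\alpha_1}\cdots\pi_{i_s,\alpha_s}$, and then turn that deviation into a bound on the logarithmic multi-information. The proof mirrors \cref{lem:pair}.

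\textbf{Step 1: cumulant decomposition of the joint probability.} Write $f^{(s)}$ through its cumulant (Ursell) expansion as a sum over set partitions $P$ of $\{1,\dots,s\}$:
\[
f^{(s)}(z_1,\dots,z_s)=\sum_{P}\prod_{B\in P}u^{(|B|)}(z_B),\qquad u^{(1)}:=\fone .
\]
The partition into singletons reproduces $\prod_k \fone(z_k)$. After integration over $C_{i_1,\alpha_1}\times\cdots\times C_{i_s,\alpha_s}$ and the normalisation $N!/(N-s)!$, this term gives $\prod_k\pi_{i_k,\alpha_k}\,(1+O(s^2/N))$, exactly as the prefactor $N^2/(N(N-1))$ was handled in \cref{lem:pair}. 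Every other partition contains at least one block $B$ with $|B|\ge2$ whose points lie in distinct cells.

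\textbf{Step 2: tree-graph decay of each connected block.} For such a block, use \eqref{eq:cluster_higher}. Integrating $\hat{u}^{(|B|)}$ over the positions in all but one cell and using the cell separation $\abs{q_a-q_b}\ge\abs{x_{i_a}-x_{i_b}}-\ell$ gives a factor $C^{|B|}e^{|B|\ell/\xi}$ times an exponential in the distance. Here I would strengthen the stated hypothesis in the standard way: integrated connected functions decay like the minimum spanning tree length of their arguments (tree-graph bound). With that, each block contributes $\exp(-\mathrm{span}(x_B)/\xi)$. The spans of the blocks, together with the inter-block distances carried by the remaining factorised pieces, sum to at least the overall span only in the fully connected term. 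The multi-information, however, isolates precisely the fully connected part: the lower-order mixed terms are the ones already subtracted when the inclusion--exclusion of \eqref{eq:entropy_add} is organised by Möbius inversion over cells. Each cell integration costs $|C_{i,\alpha}|/N$, which gives the $N^{-(s-1)}$ prefactor. All $\ell$-dependence, including $e^{s\ell/\xi}$ and the cell volumes, is absorbed into $(C'')^{s-1}$, consistent with $C''$ depending on $\ell$.

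\textbf{Step 3: from the deviation to $I_s$.} Set $\pi_{\mathbf I}=\prod_k\pi_{i_k,\alpha_k}(1+\delta_{\mathbf I})$. Then
\[
I_s=\sum_{\alpha}\pi_{\mathbf I}\ln(1+\delta_{\mathbf I}),
\]
and $\ln(1+\delta)\le\delta$ together with $\sum_\alpha\pi_{\mathbf I}=\sum_\alpha\prod_k\pi_{i_k,\alpha_k}$ (up to the $O(1/N)$ normalisation) controls $|I_s|$ by $\sum_\alpha\prod_k\pi\,\delta^2$ plus $\sum_\alpha|\pi_{\mathbf I}-\prod_k\pi|$. The latter is summed over $\alpha$ using the momentum integration already contained in $\hat{u}^{(s)}$, so the bound from Step 2 transfers to $|I_s|$.

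\textbf{Main obstacle.} I expect Step 2 to be the hardest. Hypothesis \eqref{eq:cluster_higher} only controls decay in $\abs{x_1}$, not in the spanning tree length, and the exact cancellation of the non-fully-connected partitions inside $I_s$, which is a logarithmic rather than a cumulant quantity, is not automatic. My first attempt would be to use a tree-graph bound (Penrose/Brydges) to upgrade the decay, and to accept $I_s$ as controlled only to leading order in the small deviations $\delta$. Where $\delta$ is not small, for small $\pi$, I would fall back on the crude bound $|\ln(1+\delta)|\lesssim|\delta|$ on the support.
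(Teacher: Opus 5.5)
Your skeleton is the paper's own: split $\pi_{\mathbf I}$ into $\prod_k\pi_{i_k,\alpha_k}$ plus an Ursell correction, bound the correction by cluster decay, pass to $I_s$ via $\ln(1+t)\le t$, and absorb $e^{(s-1)\ell/\xi}$ into $C''$. The genuine gap is where the paper's proof also has one: disposing of the partially connected partitions. The paper just asserts that all lower-order terms ``contribute to the factorised part''. Your justification, that $I_s$ isolates the fully connected part by M\"obius inversion, is false for $I_s$ as defined in \cref{def:multiinfo}. That $I_s$ is a divergence of $\pi_{\mathbf I}$ from the \emph{full} product of single-cell marginals, and no inclusion--exclusion takes place inside it. The partition $\{\{1,2\},\{3\},\dots,\{s\}\}$ puts into $\delta_{\mathbf I}$ a term governed only by $\abs{x_{i_1}-x_{i_2}}$. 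Take $i_1,i_2$ adjacent and move the remaining cells to distance $D$. The right-hand side of \eqref{eq:sbound} is then $O(e^{-D/\xi})$, while $I_s$ tends (up to $O(1/N)$) to $I_2(i_1,i_2)$ times the occupation probabilities of the remote cells, which does not decay in $D$. For normalised joint laws $I_s$ would be Watanabe's total correlation, with $I_s\ge I_2(i_1,i_2)$. So \eqref{eq:sbound} fails for this $I_s$ whenever adjacent cells are correlated, and no strengthening of the cluster hypothesis rescues Step~2. The cancellation you have in mind belongs to the co-information $-\sum_{\emptyset\ne T\subseteq\{i_1,\dots,i_s\}}(-1)^{|T|}H(T)$, which is precisely the quantity for which the alternating expansion \eqref{eq:chain_rule} is an exact identity. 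For it, M\"obius inversion removes every monomial in the cumulant blocks that does not cover all $s$ cells, and exact additivity of entropy over independent groups removes every monomial whose block graph is disconnected; a tree-graph bound then gives $e^{-\mathrm{span}/\xi}$. Redefining $I_s$ this way makes Step~2 sound, but Step~3 must then be redone, because an alternating sum of entropies is not a single divergence to which $\ln(1+\delta)\le\delta$ applies.

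Three further points. First, you are right that \eqref{eq:cluster_higher} cannot give span decay: it integrates $x_2,\dots,x_s$ over all of $\R^d$ and is blind to their mutual separations. The paper's proof reads $e^{-\mathrm{span}/\xi}$ off it anyway, but your tree-graph bound is an extra hypothesis, so you are proving a different lemma and should say so. Second, the factor $1+O(s^2/N)$ in Step~1 leaves a deviation $O(s^2/N)\prod_k\pi_{i_k,\alpha_k}$ that does not depend on the cell positions. It therefore cannot be absorbed into $e^{-\mathrm{span}/\xi}$ for distant cells, and since \cref{lem:pair} makes the same unjustified absorption, citing it does not help. In the canonical ideal gas, for instance, $\ftwo=\frac{N-1}{N}\fone\fone$, so $\utwo=-\fone\fone/N$ does not decay at all and exactly cancels that factor. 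The finite-$N$ normalisation and the decay hypothesis have to be reconciled, e.g.\ by imposing decay on cumulants of the normalised densities $\frac{(N-k)!}{N!}f^{(k)}$ rather than on $\us$. Third, the quadratic term $\sum_\alpha\prod_k\pi_{i_k,\alpha_k}\,\delta_{\mathbf I}^2$ in Step~3 needs pointwise control of $\delta_{\mathbf I}$, which an integrated cluster bound does not give. Your fallback does not close this, since $\pi_{\mathbf I}\abs{\delta_{\mathbf I}}=(1+\delta_{\mathbf I})\abs{\pi_{\mathbf I}-\prod_k\pi_{i_k,\alpha_k}}$ still carries the uncontrolled weight $1+\delta_{\mathbf I}$. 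The momentum-tail part of that worry is avoidable, though, because the kinetic term in \eqref{eq:Ham} separates. Hence $f^{(s)}=n^{(s)}(q_1,\dots,q_s)\prod_k\MB(p_k)$ exactly, and, with the normalisation of your Step~1, $\delta_{\mathbf I}$ is independent of $\alpha_1,\dots,\alpha_s$. Then $I_s=P_{\mathbf I}\ln\bigl(P_{\mathbf I}/\prod_k p_{i_k}\bigr)$ with $P_{\mathbf I}=\sum_{\alpha_1,\dots,\alpha_s}\pi_{\mathbf I}$ and $p_i=\sum_\alpha\pi_{i,\alpha}$, so only a lower bound on the spatial cell occupations is needed.
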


\begin{proof}
  The $s$-cell joint probability is expressed via $f^{(s)}$ using its Ursell expansion.
  The leading connected term is the $s$-body Ursell function $\us$\cite{Simon1993, Brydges1986}; all lower-order terms
  contribute to the factorised part $\prod_k\pi_{i_k,\alpha_k}$.  Writing
  $\pi_\mathbf{I} = \prod_k\pi_{i_k,\alpha_k} + R_s$, where $R_s$ is the Ursell
  correction, one has from \eqref{eq:cluster_higher} \cite{Brydges1986, Ruelle1969}:
  \begin{equation}
    \abs{R_s} \le \frac{\abs{C_{i_1,\alpha_1}}\cdots\abs{C_{i_s,\alpha_s}}}{N^s}
    \int_{C_{i_1}}\cdots\int_{C_{i_s}} \abs{\us(z_1,\dots,z_s)}\dd z_1\cdots\dd z_s
    \le \frac{C^s}{N^s}\,e^{-\mathrm{span}/\xi}.
  \end{equation}
  The multi-information satisfies $\abs{I_s}\le\sum_{\alpha_1,\dots,\alpha_s}\abs{R_s/\prod_k\pi_{i_k,\alpha_k}}$ It can easily be shown that $\ln(1+t)\le t$ for small $t$ when $R_s/\prod_k\pi_{i_k,\alpha_k}$ is small, which holds for $\abs{x_{i_k}-x_{i_l}}\gg\xi$. Using this notion, which
  holds for $\abs{x_{i_k}-x_{i_l}}\gg\xi$), giving \eqref{eq:sbound} with
  $C'' = C\,e^{(s-1)\ell/\xi}$ absorbed into the constant.\cite{Brydges1986,Gallavotti1999}
\end{proof}

\begin{proof}[Proof of \cref{prop:extensivity}(ii)]
  The exact relation between the joint entropy and marginal entropies is, by the
  chain rule of information theory applied iteratively across all cells:
  \begin{equation}\label{eq:chain_rule}
    S_{\mathrm{CG}}
    = \sum_i S_i
      - \kb\sum_{s=2}^{\infty}(-1)^s
        \sum_{\substack{i_1<\cdots<i_s\\ \text{distinct cells}}}
        I_s(i_1,\dots,i_s).
  \end{equation}
  This is the multi-information expansion; no terms are dropped.  We bound the
  total correction using \cref{lem:higher}.

  \medskip\noindent\textbf{$s=2$ contribution.}
  \begin{equation}
    \sum_{i<j}\abs{I(i,j)}
    \le \sum_{i<j}\frac{C''}{N}\,e^{-\abs{x_i-x_j}/\xi}
    \le \frac{C'' M}{N}\sum_{k=1}^\infty k^{d-1}e^{-k\ell/\xi}
    = \frac{C'' M}{N}\cdot\frac{e^{-\ell/\xi}}{(1-e^{-\ell/\xi})^d},
  \end{equation}
  where $M = \abs{\Lam}/\ell^d$ is the number of spatial cells.  For $\ell\gg\xi$:
  $(1-e^{-\ell/\xi})^d\approx 1$, so the $s=2$ contribution is
  $O\!\left(M N^{-1} e^{-\ell/\xi}\right) = O\!\left(\abs{\Lam}\rho_0/(\ell^d N)\,e^{-\ell/\xi}\right)$
  $= O\!\left(\abs{\Lam}e^{-\ell/\xi}/\ell^d N\right)$.

  \medskip\noindent\textbf{$s\ge 3$ contributions.}
  For each $s\ge 3$, summing \eqref{eq:sbound} over all $\binom{M}{s}$ ordered $s$-tuples:
  \begin{equation}
    \sum_{i_1<\cdots<i_s}\abs{I_s}
    \le M \!\left(\frac{C'' e^{-\ell/\xi}}{N}\right)^{s-1}
       \sum_{k_2,\dots,k_s\ge 1} k_2^{d-1}\cdots k_s^{d-1} e^{-(k_2+\cdots+k_s)\ell/\xi}
    = M \!\left(\frac{C'' e^{-\ell/\xi}}{N}\right)^{s-1}
       \!\left(\frac{e^{-\ell/\xi}}{(1-e^{-\ell/\xi})^d}\right)^{s-1}.
  \end{equation}
  For $\ell\gg\xi$ this is $O(M(C'' e^{-\ell/\xi}/N)^{s-1})$.
  Summing over $s\ge 2$ as a geometric series (convergent for $C'' e^{-\ell/\xi}/N\ll 1$,
  which holds for $\ell\gg\xi+d\ln N$):
  \begin{equation}
    \sum_{s=2}^\infty \sum_{i_1<\cdots<i_s}\abs{I_s}
    \le M\cdot\frac{C''e^{-\ell/\xi}/N}{1-C''e^{-\ell/\xi}/N}
    \le 2M\cdot\frac{C''}{N}\,e^{-\ell/\xi}
    = \frac{2C''\abs{\Lam}}{N\ell^d}\,e^{-\ell/\xi}.
  \end{equation}
  Substituting into \eqref{eq:chain_rule} gives
  \begin{equation}
    \abs{S_{\mathrm{CG}}-\sum_i S_i}
    \le \kb\frac{2C''\abs{\Lam}}{N\ell^d}\,e^{-\ell/\xi}
    = \frac{\kb C_1\abs{\Lam}}{\ell^d}\,e^{-\ell/\xi}
  \end{equation}
  with $C_1 = 2C''/N$, establishing~\eqref{eq:correction_bound}.
\end{proof}

\begin{remark}[Interpretation of the bound]
  The correction~\eqref{eq:correction_bound} is $O(\abs{\Lam}/\ell^d)$ times an
  exponentially small factor $e^{-\ell/\xi}$.  Per cell, the correction is
  $O(e^{-\ell/\xi})$, which vanishes as $\ell/\xi\to\infty$.  This is consistent with
  \cref{thm:tl}: by Ruelle's theorem, the specific entropy $S/\abs{\Lam}$ is
  well-defined in the thermodynamic limit, so $S_{\mathrm{CG}}/\abs{\Lam}\to s(T,\rho_0)$
  and so does $(\sum_i S_i)/\abs{\Lam}$ for large cells.  The bound
  \eqref{eq:correction_bound} makes this convergence quantitative.

  The surface correction $O(\abs{\partial\Lam})$ mentioned in earlier formulations is a
  \emph{separate} effect, arising from the finite-size corrections in Ruelle's theorem
  (Theorem~\ref{thm:tl}) and not from inter-cell correlations.  These two corrections
  coexist: the bulk correlation correction is $O(\abs{\Lam}e^{-\ell/\xi})$ (exponentially
  small for $\ell\gg\xi$) and the surface correction is $O(\abs{\partial\Lam})$ (algebraic
  in $L$).  In the thermodynamic limit, the former dominates for any finite $\ell$.
\end{remark}

\subsection{Breakdown of Extensivity for Long-Range Interactions}
\label{sub:breakdown}

\begin{corollary}[Non-additivity for non-tempered potentials]
\label{cor:breakdown}
  If the pair potential violates temperedness, with $\abs{\phi(r)}\sim r^{-s_0}$ for
  $s_0\le d$, then:
  \begin{equation}\label{eq:mutual_info_lb}
    \liminf_{\abs{x_i-x_j}\to\infty} I(i,j) > 0,
  \end{equation}
  the series in \eqref{eq:chain_rule} diverges, and the entropy is non-additive:
  \begin{equation}
    S_{\mathrm{CG}} \ne \sum_i S_i.
  \end{equation}
  The departure from extensivity is quantified exactly by
  $$\Delta S := S_{\mathrm{CG}} - \sum_i S_i
    = -\kb\sum_{s=2}^{\infty}(-1)^s\sum_{i_1<\cdots<i_s}I_s,$$
  which is negative (entropy is sub-additive relative to the sum of marginals) whenever
  inter-cell correlations are positive.
\end{corollary}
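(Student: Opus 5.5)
The plan is to deduce each of the four assertions of \cref{cor:breakdown} from one pairwise lower bound, \eqref{eq:mutual_info_lb}. The order is: first obtain \eqref{eq:mutual_info_lb} from the covariance estimate of \cref{prop:extensivity}(iii) via Pinsker's inequality; then read off divergence, non-additivity and the sign of $\Delta S$ using only information-theoretic identities already used in the paper. The exact formula for $\Delta S$ needs no new argument, since it is just \eqref{eq:chain_rule} rearranged. The sign statement I would prove independently of the series, by subadditivity of Shannon entropy. For the joint distribution $\{\pi_{i,\alpha}\}$ regarded as a law on the tuple of cell labels, $S_{\mathrm{CG}}\le\sum_i S_i$, with equality iff the cell marginals are independent. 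Hence $\Delta S\le 0$ always, and $\Delta S<0$ as soon as a single pair has $I(i,j)>0$.

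\textbf{Step 1: the pairwise lower bound.} I would apply Pinsker's inequality to the two-cell joint law \eqref{eq:joint_prob} and its product of marginals:
\[
  I(i,j)\;\ge\;\tfrac12\Bigl(\sum_{\alpha,\beta}\bigl|\pi_{(i,\alpha)(j,\beta)}-\pi_{i,\alpha}\pi_{j,\beta}\bigr|\Bigr)^{2}.
\]
This reduces \eqref{eq:mutual_info_lb} to a separation-independent lower bound on the total-variation distance. The pointwise estimate of \cref{prop:extensivity}(iii), $\ge c|x_i-x_j|^{-s_0}$, decays, so by itself it does not suffice. Instead I would use the integrated form of the non-tempered correlation. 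For $s_0\le d$ the pair Ursell function $\hat{\utwo}(x,y)$ is not integrable in $y$; this is exactly the failure of \eqref{eq:cluster_bound}, and it is why Ruelle's stability and temperedness argument breaks down~\cite{Ruelle1969}. The effective coupling between the occupation numbers of $V_i$ and $V_j$ is then a collective, mean-field field rather than $\beta\phi(x_i-x_j)$ alone. The energy shift that the cell-$j$ fluctuation induces in cell $i$ is controlled by the same non-integrable kernel $\sum_k\phi(x_i-x_k)$, whose partial sums grow like $L^{d-s_0}$ (logarithmically if $s_0=d$). I would write the joint law of $(n_i,n_j)$ as a tilted product measure with this coupling. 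The tilt, and hence the total-variation distance, then stays bounded below by some $\varepsilon>0$ uniformly in the separation, which yields \eqref{eq:mutual_info_lb}.

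\textbf{Step 2: divergence and non-additivity.} Given $\varepsilon$ with $I(i,j)\ge\varepsilon$ for all pairs beyond some separation $R$, the $s=2$ part of \eqref{eq:chain_rule} satisfies
\[
  \sum_{i<j}I(i,j)\;\ge\;\varepsilon\,\#\{(i,j):|x_i-x_j|>R\}\;\sim\;\varepsilon M^2/2 .
\]
This diverges as $M=|\Lambda|/\ell^d\to\infty$, and even relative to $M$, which contrasts with the $O(M e^{-\ell/\xi})$ bound of part (ii). For the full series I would argue absolute divergence in the sense that the $s=2$ block alone is unbounded. Cancellation against $s\ge3$ cannot restore additivity, because $\Delta S\le-\kb\max_{i<j}I(i,j)$ by the subadditivity/chain-rule argument above. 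This gives $S_{\mathrm{CG}}\ne\sum_i S_i$ and the negativity of $\Delta S$ whenever inter-cell correlations are present.

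\textbf{Main obstacle.} The hard step is Step 1: producing a separation-independent lower bound from an interaction that decays pointwise. I would first try the mean-field tilt argument, treating cell $j$'s fluctuation as coupled to cell $i$ through the divergent collective field. If that cannot be made uniform, I would fall back on letting the cell pair be embedded in a van Hove sequence, with $\ell$ growing proportionally to $|x_i-x_j|$. In that regime the covariance of the occupation numbers scales as $\ell^{2d}|x_i-x_j|^{-s_0}\gtrsim|x_i-x_j|^{2d-s_0}$ relative to the variances, which is non-decaying for $s_0\le d$.
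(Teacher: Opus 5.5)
\textbf{The non-additivity and sign argument fails under the paper's definitions.} This gap is independent of Step~1. Under the paper's definitions, $\{\pi_{i,\alpha}\}$ is the law of a \emph{single} random variable: the cell $(i,\alpha)$ occupied by one tagged particle. Correspondingly, $S_i=-\kb\sum_\alpha\pi_{i,\alpha}\ln\pi_{i,\alpha}$ is just the $i$-th partial sum of $\SCG$. Its weights sum to $p_i=\sum_\alpha\pi_{i,\alpha}\approx 1/M$, so it is not the entropy of any marginal. Hence $\sum_i S_i=\SCG$ term by term, and $\Delta S\equiv 0$ for every potential; the ideal-gas example writes this very identity.

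There is no tuple of cell labels whose joint law is $\{\pi_{i,\alpha}\}$, so subadditivity has nothing to act on. Your inequality $\Delta S\le-\kb\max_{i<j}I(i,j)$ fails as soon as some $I(i,j)>0$. Structurally, $\SCG$ depends on $\fone$ alone, whereas $I(i,j)$ is built from $\ftwo$. So no identity can make $\SCG-\sum_iS_i$ detect inter-cell correlations; in particular \eqref{eq:chain_rule} cannot, and the paper's own proof relies on it. As stated, the conclusion $\SCG\ne\sum_iS_i$ is in fact false.

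Your subadditivity argument becomes correct only after redefining $\SCG$ as the entropy of a genuine multi-cell joint law. For example, take the occupation vector $(n_{i,\alpha})_{i,\alpha}$ under $\varrho_N$, with $S_i$ the entropy of its cell-$i$ marginal. Then $\Delta S\le-\kb\, I\bigl((n_{i,\alpha})_\alpha;(n_{j,\beta})_\beta\bigr)\le 0$. However, the mutual information there is that of occupation vectors, not the two-particle quantity of \cref{def:multiinfo}.

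\textbf{Step~1 is not a proof.} You are right that the paper's route cannot give \eqref{eq:mutual_info_lb}. It only produces a covariance of order $\abs{x_i-x_j}^{-s_0}$, which tends to zero. Via Pinsker, such a covariance yields only $I(i,j)\gtrsim\abs{x_i-x_j}^{-2s_0}$, not the paper's $c\abs{x_i-x_j}^{-s_0}$, and its pair sum diverges only for $s_0\le d/2$.

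Your replacement is the claim that a ``mean-field tilt'' keeps the total-variation distance above some $\varepsilon>0$ uniformly in the separation. This is asserted without any estimate, and the hypothesis $\abs{\phi(r)}\sim r^{-s_0}$ does not imply it. In regularised Coulomb systems, Debye screening makes truncated correlations decay rapidly. In Kac-normalised long-range models in the one-phase region, the covariance of two fixed cells vanishes as the system grows, so $I(i,j)\to 0$. The non-integrable kernel shows up in global fluctuations, not as a non-vanishing pairwise dependence.

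Your fallback, letting $\ell$ grow proportionally to $\abs{x_i-x_j}$, proves a different statement, since the corollary is posed at a fixed partition scale $\ell$. It also still presupposes the pointwise asymptotics of $\hat{\utwo}$, which are assumed rather than derived from Ruelle's theory. Pinsker must in any case be applied to the normalised conditional laws $\pi_{(i,\alpha)(j,\beta)}/\sum_{\alpha',\beta'}\pi_{(i,\alpha')(j,\beta')}$, because at fixed $(i,j)$ these weights do not sum to one. Finally, your divergence estimate $\varepsilon M^2/2$ uses the $\varepsilon$ of Step~1, so the divergence claim is unsupported as well.
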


\begin{proof}
  For $s_0\le d$, the Ursell function $\hat\utwo(x,y)$ decays only as
  $\abs{x-y}^{-s_0}$~\cite{Ruelle1969}, a consequence of the failure of the cluster property when correlations are long-range \cite{Griffiths1964,Newman1980}.  Since $\int_{\R^d}\abs{x-y}^{-s_0}\dd y$
  diverges for $s_0\le d$, the bound \eqref{eq:cluster_bound} fails for any $\xi<\infty$.
  Consequently, $\abs{\pi_{(i,\alpha)(j,\beta)}-\pi_{i,\alpha}\pi_{j,\beta}} \sim
  \abs{x_i-x_j}^{-s_0}$ does not tend to zero, and $I(i,j)\ge c\abs{x_i-x_j}^{-s_0}$
  for some $c>0$.  The series
  $\sum_{i<j}I(i,j)\ge c\,M\sum_{k=1}^\infty k^{d-1}k^{-s_0\ell^{s_0}}$
  diverges for $s_0\le d$, so the entropy correction does not vanish.
\end{proof}

\subsection{Worked Example: Ideal Gas with Spatial Cells}
\label{sub:ideal_gas}

\begin{example}[Non-interacting ideal gas]\label{ex:ideal}
  Set $\phi\equiv 0$.  The $N$-body Gibbs measure factorises exactly:
  $\varrho_N(\gamma) = \prod_{k=1}^N f_{\mathrm{MB}}(p_k)/\abs{\Lam}$, where
  $f_{\mathrm{MB}}(p) = (2\pi m\kb T)^{-d/2}e^{-\abs{p}^2/(2m\kb T)}$.

  \medskip\noindent\textbf{Reduced densities.}
  $\fone(x,p) = \rho_0\,f_{\mathrm{MB}}(p)$ (uniform in $x$),
  $\ftwo(z_1,z_2) = \fone(z_1)\fone(z_2)$ (exact factorisation, $\utwo\equiv 0$).

  \medskip\noindent\textbf{Mesoscopic probability.}
  $\pi_{i,\alpha} = \frac{\abs{C_{i,\alpha}}}{N}\rho_0\bar{f}_\alpha$
  where $\bar{f}_\alpha = \nu_\alpha^{-1}\int_{\Pi_\alpha}f_{\mathrm{MB}}(p)\dd p$.

  \medskip\noindent\textbf{Factorisation.}
  Since $\utwo\equiv 0$, \cref{lem:pair} gives
  $\pi_{(i,\alpha)(j,\beta)} = \pi_{i,\alpha}\pi_{j,\beta}$ \emph{exactly} for all
  $i\ne j$.  Hence $I(i,j) = 0$ for all pairs and $I_s = 0$ for all $s\ge 2$.

  \medskip\noindent\textbf{Entropy.}
  The coarse-grained entropy is exactly
  $S_{\mathrm{CG}} = \sum_i S_i = -\kb\sum_{i,\alpha}\pi_{i,\alpha}\ln\pi_{i,\alpha}$,
  in agreement with~\eqref{eq:correction_bound} (left-hand side is exactly zero).
  This recovers the ideal-gas Sackur--Tetrode entropy\cite{Koutsoyiannis2013,Panos2015} when the partition is taken to
  single quantum cells of volume $h^d$.
\end{example}

\subsection{Deriving Which Variables Are Extensive}
\label{sub:which_vars}

The thermodynamic limit (\cref{thm:tl}) establishes
$F(N,V,T) = Nf(T,N/V) + O(\abs{\partial\Lam})$.
Hence $F(\lambda N,\lambda V,T) = \lambda F(N,V,T)+O(\abs{\partial\Lam})$ for all
$\lambda>0$: $F$ is homogeneous of degree one in $(N,V)$ in the bulk.  By the
thermodynamic relations $U = F + TS$ and $S = -\partial F/\partial T$, the same
homogeneity holds for $U$ and $S$.  The variables $U$, $S$, $V$, $N$ therefore earn the
label ``extensive'' as a \emph{derived consequence} of microscopic stability and
temperedness, not by postulate.

\subsection{The Generalised Euler Relation}
\label{sub:euler}

\begin{theorem}[Generalised Euler relation]\label{thm:euler}
  Under stability and temperedness, for a finite domain $\Lam$ with $\abs{\partial\Lam}<\infty$:
  \begin{equation}\label{eq:euler_gen}
    U = TS - PV + \mu N + E_\partial,
  \end{equation}
  where $T:=\partial U/\partial S$, $P:=-\partial U/\partial V$,
  $\mu:=\partial U/\partial N$, and
  $E_\partial = O(\abs{\partial\Lam})$ is a surface correction satisfying
  $E_\partial = \sigma\abs{\partial\Lam} + O(\abs{\partial\Lam}^{(d-1)/d})$
  with $\sigma$ the surface tension.  In the thermodynamic limit $E_\partial/U\to 0$
  and the standard Euler relation is recovered.
\end{theorem}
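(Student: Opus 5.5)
The approach is to derive \eqref{eq:euler_gen} from the finite-volume free energy supplied by \cref{thm:tl}. We split it into a bulk part, which is exactly homogeneous of degree one, and a boundary remainder. We then apply Euler's theorem for homogeneous functions to the bulk part and push the remainder through the Legendre transform. Concretely, for fixed $T$ we write
\begin{equation*}
  F(N,V,T) = V\,\psi(T,N/V) + F_\partial(N,V,T),
\end{equation*}
where $\psi(T,\rho):=\rho f(T,\rho)$ is the Ruelle--Fisher free-energy density and $F_\partial := F - V\psi$. By \cref{thm:tl}, $F_\partial = O(\abs{\partial\Lam})$. The bulk term $F_{\mathrm{b}}:=V\psi(T,N/V)$ satisfies $F_{\mathrm{b}}(\lambda N,\lambda V,T)=\lambda F_{\mathrm{b}}(N,V,T)$ exactly. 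By Euler's theorem for degree-one homogeneous functions (which requires differentiability of $\psi$; concavity from \cref{thm:tl} gives it almost everywhere, and in the one-phase region we assume $C^1$) we obtain
\begin{equation*}
  F_{\mathrm{b}} = -P_{\mathrm{b}}V + \mu_{\mathrm{b}}N,
\end{equation*}
with $P_{\mathrm{b}}=-\partial_V F_{\mathrm{b}}$ and $\mu_{\mathrm{b}}=\partial_N F_{\mathrm{b}}$.

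Next we pass from $F$ to $U$. Using $U=F+TS$ and $S=-\partial_T F$, the identities $P=-\partial U/\partial V|_{S,N}=-\partial F/\partial V|_{T,N}$ and $\mu=\partial U/\partial N|_{S,V}=\partial F/\partial N|_{T,V}$ hold as standard Legendre identities at finite volume. Substituting gives
\begin{equation*}
  U = TS - PV + \mu N + E_\partial,
\end{equation*}
where
\begin{equation*}
  E_\partial := F_\partial - V\,\partial_V F_\partial - N\,\partial_N F_\partial .
\end{equation*}
This is exactly $F_\partial$ minus its ``Euler part''. So $E_\partial$ measures the failure of degree-one homogeneity of the boundary term. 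In particular it vanishes identically if $F_\partial$ is itself homogeneous of degree one, which it is not when $F_\partial$ scales like $V^{(d-1)/d}$.

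To identify $\sigma$, we refine \cref{thm:tl} for van Hove domains with regular boundary, taking $\Lam=\lambda\Lam_0$, and use
\begin{equation*}
  F_\partial = \sigma(T,\rho)\abs{\partial\Lam} + o(\abs{\partial\Lam}).
\end{equation*}
This is the standard surface free-energy expansion, obtained by the same subadditivity argument applied to the boundary layer, with the exponential clustering of \cref{def:cluster} controlling the layer thickness by $\xi$. Since $\abs{\partial\Lam}\propto V^{(d-1)/d}$ at fixed shape, we get $V\partial_V\abs{\partial\Lam}=\tfrac{d-1}{d}\abs{\partial\Lam}$. Inserting this into $E_\partial$ yields $E_\partial = \tilde\sigma\abs{\partial\Lam}+\dots$, where $\tilde\sigma$ is a linear combination of $\sigma$ and $\rho\,\partial_\rho\sigma$; we call this combination the surface tension in the statement. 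Finally, $E_\partial/U\to0$ follows because $U\sim N u(T,\rho_0)$ grows like $\abs{\Lam}$ while $E_\partial=O(\abs{\Lam}^{(d-1)/d})$, under the proviso that $u\neq0$ (otherwise we compare to $\abs{\Lam}$ instead).

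\textbf{Main obstacle.} The hard step is controlling the derivatives $\partial_V F_\partial$ and $\partial_N F_\partial$. An $O(\abs{\partial\Lam})$ bound on $F_\partial$ alone does not bound its derivatives. We would first try to exploit convexity: $F$ is convex in $V$ and, after the usual concavity in $\rho$, the bulk part is convex as well. Convexity gives one-sided difference-quotient bounds, so derivatives of a convex function whose deviation from a smooth convex function is $O(\abs{\partial\Lam})$ are controlled up to $O(\abs{\partial\Lam}/V)$ on intervals of length comparable to $V$. This yields $V\partial_V F_\partial = O(\abs{\partial\Lam})$. If this fails, we fall back to proving the surface expansion together with its derivative via the cluster expansion, which is valid in the high-temperature or low-density regime of \cref{rem:cluster}.
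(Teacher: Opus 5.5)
Your bookkeeping $E_\partial=F_\partial-(V\partial_V+N\partial_N)F_\partial$ is correct, and it is more careful than the paper's proof. The paper takes $F=Nf+\sigma\abs{\partial\Lam}+O(\abs{\partial\Lam}^{(d-1)/d})$ and holds $\abs{\partial\Lam}$ fixed under $(N,V)\mapsto(\lambda N,\lambda V)$, so that $F(\lambda N,\lambda V,T)=\lambda F-(\lambda-1)\sigma\abs{\partial\Lam}+\cdots$. It then differentiates at $\lambda=1$.

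Your identification of $\sigma$, however, does not go through as described. Carry out your own computation with $F_\partial=\sigma(T,N/V)\abs{\partial\Lam}$ and $\abs{\partial\Lam}\propto V^{(d-1)/d}$. The $\rho\,\partial_\rho\sigma$ terms cancel, because $V\partial_V+N\partial_N$ annihilates every function of $N/V$. What remains is $E_\partial=\tfrac{1}{d}\,\sigma\abs{\partial\Lam}$ at leading order, not a combination involving $\rho\,\partial_\rho\sigma$. Declaring $\sigma/d$ to be ``the surface tension'' changes the meaning of the symbol. With $\sigma$ the surface free energy per unit area (the coefficient in $F$, as in the paper), your fixed-shape convention reproduces the statement only for $d=1$.

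The paper's full coefficient $\sigma$ arises precisely because it keeps the boundary area fixed. Equivalently, it takes $P$ and $\mu$ as derivatives at fixed $\abs{\partial\Lam}$ (Gibbs' convention), in which case $(V\partial_V+N\partial_N)F_\partial$ vanishes at leading order. You need to fix the convention defining $P$ and $\mu$ and derive the coefficient from it, rather than relabel.

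The derivative control you flag as the main obstacle is a genuine gap, and your convexity argument does not close it. There is no general convexity of the finite-volume canonical $F$ in $V$; finite systems can show loops. Also, $\partial_N$ at integer $N$ needs interpretation.

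Even granting convexity, one-sided difference quotients with step $h$ give $\abs{\partial_VF_\partial}\le \tfrac{h}{2}\sup\abs{\partial_V^2F_{\mathrm{b}}}+2\sup\abs{F_\partial}/h$, with $\partial_V^2F_{\mathrm{b}}=\rho^2\psi''(\rho)/V$. So with $h\sim V$ the first term is $O(1)$, not $O(\abs{\partial\Lam}/V)$. The optimal choice $h\sim V^{1-1/(2d)}$ yields only $V\partial_VF_\partial=O(V^{1-1/(2d)})$. That suffices for $E_\partial/U\to0$ (given convexity in both variables). It does not give $E_\partial=O(\abs{\partial\Lam})$, let alone $\sigma\abs{\partial\Lam}+O(\abs{\partial\Lam}^{(d-1)/d})$. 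Your $o(\abs{\partial\Lam})$ remainder is also weaker than the claimed one.

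What is missing is a surface expansion of $F$ whose remainder comes with matching bounds on its $(N,V)$-derivatives. This is not contained in Theorem~\ref{thm:tl}, which gives only $F-Nf=O(\abs{\partial\Lam})$. Obtaining it from the cluster expansion, or from the exponential clustering you invoke for the boundary layer, requires hypotheses beyond the stability and temperedness assumed in the statement. The paper does not prove this step either; it differentiates the asymptotic expansion term by term. If you adopt that as an input, state it explicitly as an assumption, because convexity will not substitute for it.
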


\begin{proof}
  In the thermodynamic limit, $U$ is homogeneous of degree one in $(S,V,N)$ by
  \cref{thm:tl} (convexity of $F$ in $(N,V)$ is established in \cite{Israel1979, Ruelle1969}).  Euler's homogeneous function theorem applied to
  $U(\lambda S,\lambda V,\lambda N) = \lambda U(S,V,N)$ at $\lambda=1$ yields
  $U = T S - PV + \mu N$.

  For a finite system, \cref{thm:tl} gives $F = Nf + \sigma\abs{\partial\Lam} +
  O(\abs{\partial\Lam}^{(d-1)/d})$.  Differentiating with respect to $\lambda$ in
  $F(\lambda N,\lambda V,T) = \lambda F - (\lambda-1)\sigma\abs{\partial\Lam}+\cdots$
  and evaluating at $\lambda=1$:
  \[
    N\frac{\partial F}{\partial N} + V\frac{\partial F}{\partial V} = F - \sigma\abs{\partial\Lam} + \cdots,
  \]
  which gives $-\mu N - PV = F - \sigma\abs{\partial\Lam} + O(\abs{\partial\Lam}^{(d-1)/d})$, so
  $U = F+TS = TS-PV+\mu N+\sigma\abs{\partial\Lam}+O(\abs{\partial\Lam}^{(d-1)/d})$, yielding
  \eqref{eq:euler_gen} with $E_\partial = \sigma\abs{\partial\Lam}+O(\abs{\partial\Lam}^{(d-1)/d})$.

  The intensivity of $T$, $P$, $\mu$ (homogeneity of degree zero) follows immediately from
  differentiating $U(\lambda S,\lambda V,\lambda N)=\lambda U$ with respect to $\lambda$
  and then with respect to $S$, $V$, $N$ separately; no additional postulate is needed.
\end{proof}

\begin{remark}
  The surface correction $E_\partial = O(\abs{\partial\Lam})$ in \cref{thm:euler} is a well-known result in the rigorous statistical mechanics literature~\cite{Ruelle1969}. \cref{thm:euler} derives it here as a by-product of the coarse-graining framework, in order to contrast it with the exponentially small bulk correction
  \eqref{eq:correction_bound}: the two are of different physical origin and of different
  functional form in $L$.
\end{remark}
\section{Numerical Illustrations}

The propositions and lemmas of the preceding sections establish the
theoretical framework analytically. We now provide a set of numerical
illustrations that make each result concrete, verify the predicted
scalings computationally, and allow direct comparison with standard
Gibbs statistics. The computations are performed for a one-dimensional
classical gas ($d=1$) in natural units $\kB = m = 1$, a setting in
which all quantities can be evaluated exactly or to high numerical
precision without approximation beyond the discretisation of phase space.

\subsection{Physical Model}

We consider $N$ classical particles of unit mass confined to a one-dimensional domain $\Lam = [-L, L]$ of length $2L$, with pairwise
interaction potential $\phi:\mathbb{R}\to\mathbb{R}\cup\{+\infty\}$. The single-particle phase space is $M = \Lam \times \mathbb{R}$, and the one-particle reduced density takes the factorised form
\begin{equation}\label{eq:f1-model}\fone(x, p) \;=\; \varrho(x)\,\MB(p; T),\qquad\MB(p; T)
    \;=\;\frac{1}{\sqrt{2\pi T}}
    \exp\!\left(-\frac{p^2}{2T}\right),
\end{equation}
where $\MB(p; T)$ is the Maxwell--Boltzmann momentum distribution and
$\varrho(x)$ is the spatial number density. The factorisation
of~\eqref{eq:f1-model} holds exactly for the ideal gas and to leading
order in the interaction strength for the weakly interacting models
considered below. Three distinct interaction regimes are studied,
corresponding to the three physically relevant cases identified in the
main theorems.

\subsubsection*{Model I: Ideal gas ($\phi \equiv 0$)}

The pair potential vanishes identically. The $N$-body Gibbs
measure factorises exactly:
\[
    \varrho_N(\gamma)
    \;=\;
    \prod_{k=1}^N \frac{\varrho_0\,\MB(p_k; T)}{N},
\]
with $\varrho(x) = \varrho_0 = N/(2L)$ uniform. The pair Ursell
function satisfies $\utwo \equiv 0$ exactly, the joint mesoscopic
probability factorises as $\pi_{(i,\alpha)(j,\beta)} = \pia\,\pi_{j,\beta}$
for all $i\neq j$, and the mutual information $I(i,j) = 0$ for all cell
pairs. This model serves as the exact reference against which the
mesoscopic framework is calibrated: all deviations from standard Gibbs
statistics are purely numerical artefacts of the finite cell partition,
not physical effects.

\subsubsection*{Model II: Short-range interacting gas}

A weak repulsive interaction is introduced via a Yukawa-type pair
potential. The momentum-integrated modulus of the pair Ursell function
is modelled as
\begin{equation}\label{eq:u2-short}
    \hat{u}^{(2)}(x, y)
    \;:=\;
    A_{\mathrm{SR}}\,\exp\!\left(-\frac{|x - y|}{\xi}\right),
\end{equation}
where $A_{\mathrm{SR}} > 0$ is the correlation amplitude and $\xi > 0$
is the correlation length. This satisfies Definition~2.5 with
exponential decay rate $\xi^{-1}$, placing the model firmly in the
regime covered by Proposition~4.1: temperedness holds, and the
correction to entropy additivity is exponentially suppressed as
$O(|\Lam|\,\ell^{-1}\,e^{-\ell/\xi})$ for cells of diameter
$\ell \gg \xi$.

\subsubsection*{Model III: Long-range interacting gas}

The pair Ursell function decays algebraically:
\begin{equation}\label{eq:u2-long}
    \hat{u}^{(2)}(x, y)
    \;:=\;
    \frac{A_{\mathrm{LR}}}{(1 + |x - y|)^{s_0}},
    \qquad
    s_0 \;<\; d \;=\; 1,
\end{equation}
with $s_0 = 0.5$ and $A_{\mathrm{LR}} > 0$. Since $s_0 < d$, the
temperedness condition (Definition~2.4) fails:
$\int_{\mathbb{R}} |\hat{u}^{(2)}(x,y)|\,\mathrm{d}y$ diverges. By
Corollary~4.6, correlations between cells decay only as
$|x_i - x_j|^{-s_0}$, the mutual information series
$\sum_{i < j} I(i,j)$ diverges, and the entropy is genuinely
non-additive. This model illustrates the breakdown of extensivity
identified in Proposition~4.1(iii).

\subsection{Numerical Implementation}

Single-particle phase space $M = [-L, L]\times [-P_{\max}, P_{\max}]$
is partitioned into a uniform grid of $M_x \times M_p$ rectangular cells
of dimensions $\Delta x = 2L/M_x$ and $\Delta p = 2P_{\max}/M_p$,
with $P_{\max}$ chosen sufficiently large that the momentum truncation
error is negligible ($P_{\max} = 6\sqrt{T}$ ensures less than $10^{-8}$
relative error in the Maxwell--Boltzmann normalisation). The mesoscopic
probabilities (Definition~3.4) are evaluated as
\begin{equation}\label{eq:pi-numerical}
    \pia
    \;\approx\;
    \frac{\Delta x\,\Delta p}{N}
    \,\fone(x_i, p_\alpha),
    \qquad
    x_i = -L + (i - \tfrac{1}{2})\Delta x,
    \quad
    p_\alpha = -P_{\max} + (\alpha - \tfrac{1}{2})\Delta p,
\end{equation}
where $x_i$ and $p_\alpha$ are cell-centre coordinates. The resulting
array is renormalised to correct for momentum truncation:
$\pia \leftarrow \pia / \sum_{i,\alpha}\pia$, ensuring exact
normalisation $\sum_{i,\alpha}\pia = 1$ in all cases.

The coarse-grained entropy (Definition~3.6) is computed as
\begin{equation}\label{eq:SCG-numerical}
    \SCG
    \;=\;
    -\kB\sum_{\substack{i,\alpha \\ \pia > 0}}
    \pia\ln\pia,
\end{equation}
and the Gibbs fine-grained entropy (the reference standard) is evaluated
by two-dimensional numerical quadrature on a fine grid of $2000\times
2000$ points:
\begin{equation}\label{eq:SGibbs-numerical}
    \SG
    \;=\;
    -\kB
    \int_M \fone(z)\ln\!\left(\frac{\fone(z)}{N}\right)\mathrm{d}z.
\end{equation}
For the ideal gas, $\SG$ is also computed analytically
(the one-dimensional Sackur--Tetrode analogue) and used as an
independent check on the numerical accuracy of~\eqref{eq:SGibbs-numerical}.

The mutual information between spatial cells $i$ and $j$ is evaluated
from the Ursell correction to the joint probability
(Lemma~4.2, Eq.~(26)--(28)):
\begin{equation}\label{eq:MI-numerical}
    I(i,j)
    \;\approx\;
    \frac{1}{2}
    \left[
      \hat{u}^{(2)}(x_i, x_j)\,
      \frac{(\Delta x)^2}{N^2}
    \right]^2,
\end{equation}
which is the leading-order approximation valid for small correlation
amplitude $A \ll 1$.

The Jensen correction of Proposition~3.2 is computed cell by cell
on a refined sub-grid of $n_{\mathrm{fine}} = 20$ points within each
spatial cell:
\begin{equation}\label{eq:jensen-numerical}
    \Delta_J
    \;=\;
    \sum_i \int_{\mathbb{R}}
    \left[
      \eta\!\left(\bar\varrho_i(p)\right)
      -
      \frac{1}{\Delta x}\int_{V_i}
      \eta\!\left(\fone(x,p)\right)\mathrm{d}x
    \right]\mathrm{d}p,
    \qquad
    \eta(\varrho) = \kB\varrho\ln\varrho,
\end{equation}
evaluated using the midpoint rule on the sub-grid.

The Euler surface correction (Theorem~4.8) is modelled as
\begin{equation}\label{eq:euler-numerical}
    \Esurf(L)
    \;=\;
    \sigma\,|\partial\Lam| + \alpha\,L^{-1}
    \;=\;
    2\sigma + \alpha\,L^{-1},
\end{equation}
with $|\partial\Lam| = 2$ in one dimension, $\sigma = 0.05\,T$ (surface
tension), and $\alpha = 0.2\,T$ (sub-leading finite-size coefficient).
This parametrisation captures the two dominant terms in the finite-size
expansion of the free energy established in Theorem~2.7.

\subsection{Parameter Values}

Unless otherwise stated, the following parameter values are used
throughout:

\begin{center}
\begin{tabular}{lll}
\toprule
\textbf{Parameter} & \textbf{Symbol} & \textbf{Value} \\
\midrule
Boltzmann constant       & $\kB$             & $1$ (natural units) \\
Particle mass            & $m$               & $1$ (natural units) \\
Temperature              & $T$               & $1$ \\
Number density           & $\varrho_0$       & $1$ \\
System half-length       & $L$               & $20$ \\
Momentum cutoff          & $P_{\max}$        & $6$ \\
Correlation length       & $\xi$             & $2$ \\
Short-range amplitude    & $A_{\mathrm{SR}}$ & $0.3$ \\
Long-range amplitude     & $A_{\mathrm{LR}}$ & $0.3$ \\
Long-range exponent      & $s_0$             & $0.5$ \\
Default cell count       & $M_x = M_p$       & $32$ \\
\bottomrule
\end{tabular}
\end{center}

\noindent
Where parameters are varied---for example, when plotting a quantity as
a function of cell count $M$ or system size $L$---the range is stated
explicitly in the caption of the corresponding figure.

\subsection{Figures and Their Correspondence to the Main Results}

The eight figures that follow each provide a direct numerical
illustration of a specific result from the theoretical sections.
Their correspondence to the propositions and theorems of the paper
is summarised in Table~\ref{tab:figures}.

\begin{table}[h]
\centering
\caption{Correspondence between figures and theoretical results.}
\label{tab:figures}
\begin{tabular}{p{0.8cm} p{5.5cm} p{3.5cm} p{3cm}}
\toprule
\textbf{Fig.} & \textbf{Quantity plotted} &
\textbf{Paper reference} & \textbf{Key result illustrated} \\
\midrule
1
    & One-particle reduced density $\fone(x,p)$
      for Models~I--III
    & Definition~2.1
    & Foundation: dependence of $\fone$ on interaction range \\[6pt]
2
    & Mesoscopic probabilities $\pia$
      for $M_x = M_p = 20$
    & Definition~3.4, Eq.~(20)
    & Coarse-graining operator $\mathcal{C}$ applied to $\fone$ \\[6pt]
3
    & $\SCG/\SG$ and $|\SCG - \SG|$ vs $M$
    & Theorem~2.7
    & Convergence to Gibbs, rate $O(M^{-2})$ \\[6pt]
4
    & $|\SCG - \sum_i S_i|$ vs $\ell/\xi$
    & Proposition~4.1(ii), Eq.~(25)
    & Exponential restoration of additivity for $\ell \gg \xi$ \\[6pt]
5
    & $I(i,j)$ vs $|x_i - x_j|/\xi$
    & Lemma~4.2; Corollary~4.6
    & Exponential (Model~II) vs power-law (Model~III) decay \\[6pt]
6
    & $\SCG/N$ vs $N$ at fixed density
    & Section~4.6
    & Extensivity (flat) vs non-extensivity (rising) \\[6pt]
7
    & Jensen correction $\Delta_J$ vs cell size $\ell$
    & Proposition~3.2, Eqs.~(18)--(19)
    & Non-commutativity: $\Delta_J \leq 0$ at all scales \\[6pt]
8
    & $\Esurf/U$ vs system size $L$
    & Theorem~4.8, Eq.~(39)
    & Surface correction $\Esurf/U \to 0$ as $L\to\infty$ \\
\bottomrule
\end{tabular}
\end{table}

\begin{remark}[Scope of the numerical results]
The figures constitute illustrations of the theoretical results, not
independent proofs. Their purpose is threefold: to verify that the
predicted scalings (exponential decay in Figure~4, power-law decay
in Figure~5, $O(M^{-2})$ convergence in Figure~3) are observed
numerically; to provide intuition for the magnitudes of the corrections
in physically relevant parameter regimes, and to demonstrate the
qualitative contrast between Models~I--III that motivates the distinction between extensive and non-extensive systems drawn in
Section~4. All results are shown to be consistent with the analytical
bounds of Propositions~4.1, Lemma~4.2, and Corollary~4.6. The
computations were performed in Python using standard numerical
libraries (\texttt{NumPy}, \texttt{SciPy}, \texttt{Matplotlib}).
\end{remark}
We have placed the plots at the end of the paper in section \ref{Sim} for ease of reference. 
\bigskip
\begin{center}
\begin{tabular}{clll}
\toprule
Plot & Quantity & Paper reference & Key result \\
\midrule
 \ref{fig:my_p1} & $\fone(x,p)$ heatmap
    & Definition~2.1
    & Foundation: one-body reduced density \\
 \ref{fig:my_p2} & $\pia$ grid
    & Definition~3.4, Eq.~(20)
    & Coarse-grained description \\
 \ref{fig:my_p3}  & $\SCG/\SG$ vs $M$
    & Theorem~2.7
    & Convergence to Gibbs, $O(M^{-2})$ \\
 \ref{fig:my_p4}  & $|\SCG - \sum_i S_i|$ vs $\ell/\xi$
    & Proposition~4.1(ii), Eq.~(25)
    & Exponential additivity restoration \\
 \ref{fig:my_p5}  & $I(i,j)$ vs $|x_i-x_j|/\xi$
    & Lemma~4.2, Corollary~4.6
    & Exponential vs algebraic decay \\
 \ref{fig:my_p6}  & $\SCG/N$ vs $N$
    & Section~4.6
    & Extensivity vs non-extensivity \\
 \ref{fig:my_p7}  & Jensen correction $\Delta_J$ vs $\ell$
    & Proposition~3.2
    & Non-commutativity $\leq 0$ \\
 \ref{fig:my_p8}  & $\Esurf/U$ vs $L$
    & Theorem~4.8
    & Surface correction $\to 0$ \\
\bottomrule
\end{tabular}
\end{center}
\section{Discussion}
\label{sec:discussion}

\paragraph{Summary of the framework.}
The operator we defined, $\CGM = \CGx\circ\CGp$, coarse-grains the one-particle reduced density on single-particle phase space $\Ms=\Lam\times\R^d$.  By working on $\Ms$ rather than on the $2dN$-dimensional $N$-body phase space $\GN$, the joint mesoscopic probabilities (Definition~\ref{def:joint}) are well-defined via the reduced two-particle density $\ftwo$---a function on $\Ms\times\Ms$ and the ill-definedness that arises when one tries to integrate a single $N$-body density over two disjoint phase-space cells simultaneously is avoided.

\paragraph{Separation of two distinct corrections.}
Our propositions \cref{prop:extensivity} and \cref{thm:euler} distinguish between two physically independent sources of deviation from strict extensivity:
\begin{enumerate}[label=(\alph*)]
  \item \textit{Bulk correlation correction}: $O(\abs{\Lam}e^{-\ell/\xi})$.  This arises
    from inter-cell statistical dependence in the \emph{bulk} and is exponentially
    suppressed for $\ell\gg\xi$.
  \item \textit{Surface correction}: $O(\abs{\partial\Lam})$.  This arises from the
    breaking of translation invariance at the boundary, and is present even for
    non-interacting systems.
\end{enumerate}
Standard textbook treatments conflate these two corrections under a single $O(\abs{\partial\Lam})$ term, which is incorrect in general.

\paragraph{Long-range interactions.}
For gravitational systems (e.g.\ self-gravitating gas clusters~\cite{Ruelle1969}) or unscreened Coulomb plasmas, $\phi(r)\sim r^{-1}$ in $d=3$ and temperedness fails(for a survey of such systems see \cite{Balian1991,Gallavotti1999}). \cref{cor:breakdown} shows that $I(i,j)$ decays only as $\abs{x_i-x_j}^{-1}$, whose sum over all pairs diverges, and $S_{\mathrm{CG}}\ne\sum_i S_i$.  This is consistent with the known non-extensivity of self-gravitating systems studied in the Tsallis framework~\cite{Tsallis1988,Touchette2002}.

\paragraph{Non-commutativity and the Buchert problem.}
\cref{prop:noncomm} demonstrates that the spatial average of a nonlinear thermodynamic functional differs from the functional of the spatial average by a Jensen correction term.
In the context of general relativity, this hints at the origin of the Buchert backreaction~\cite{Buchert2000,Rasanen2006,Wiltshire2011}: spatial averaging of the Einstein equations produces extra terms (backreaction) because the equations are nonlinear in the metric.  \cref{prop:noncomm} is the flat-space, non-relativistic analogue of this statement, providing a thermodynamic counterpart to the cosmological averaging problem.

\paragraph{Scope and limitations.}
The results are established for classical systems in equilibrium.  Extensions to quantum
systems, non-equilibrium steady states, or relativistic kinematics would require
corresponding generalisations of the Ursell expansion, and are not addressed here.  The
cluster decomposition condition (\cref{def:cluster}) restricts the results to the
single-phase region; at phase transitions the correlation length $\xi$ diverges and the
bounds weaken.

\section{Conclusion}
\label{sec:conclusion}

We have provided a constructive, operator-based derivation of extensivity in classical
statistical mechanics.  The main contributions are:

\begin{enumerate}[label=(\roman*)]
  \item \textbf{Consistent framework.}  By working with reduced $s$-particle densities
    on single-particle phase space $\Ms$ rather than the full $N$-body phase space, all
    joint mesoscopic probabilities are well-defined, dimensionless, and properly
    normalised.

  \item \textbf{ proof of entropy additivity at higher orders.}
    \cref{prop:extensivity} proves that the coarse-grained entropy satisfies
    $S_{\mathrm{CG}} = \sum_i S_i + O(\abs{\Lam}e^{-\ell/\xi}/\ell^d)$ by bounding the
    entire multi-information series using the Ursell cluster expansion, including all
    $s$-body terms.

  \item \textbf{Non-commutativity of nonlinear functionals.}
    \cref{prop:noncomm} demonstrates via Jensen's inequality that spatial averaging does not
    commute with nonlinear thermodynamic functionals such as the entropy density.  This is
    the structural source of non-extensivity and connects to the Buchert--Ostermann
    commutation problem in relativistic cosmology.

  \item \textbf{Quantified non-extensivity.}  For systems with long-range interactions,
    \cref{cor:breakdown} identifies inter-cell mutual information as the precise measure
    of departure from extensivity, providing a bridge between thermodynamics and
    information theory.

  \item \textbf{Generalised Euler relation.}  \cref{thm:euler} derives the standard Euler
    relation $U=TS-PV+\mu N$ as a corollary of the thermodynamic limit, and gives the
    explicit surface correction $E_\partial=O(\abs{\partial\Lam})$ for finite systems.
\end{enumerate}

These results are consistent with the classical work of Ruelle~\cite{Ruelle1969}, Fisher~\cite{Fisher1964}, and Lebowitz--Penrose~\cite{LebowitzPenrose1966}, and with the
Gibbs-measure framework of Georgii~\cite{Georgii1988}.  The contribution of the present paper is an explicit operator formalism that makes the factorisation mechanism transparent,
quantifies all correction terms, and connects thermodynamic extensivity to the broader problem of spatial averaging in nonlinear theories. We reserve the question of perturbation theory, in particular higher order perturbation theory in the cosmological context \cite{Osano2017,Osano2020,OsanoOreta2020}, for future work.

\section*{Acknowledgements}

The author thanks the Next Generation Professorate (UCT) program for financial support.

\appendix

\bibliographystyle{unsrtnat}

\begin{thebibliography}{99}

\bibitem{Callen1985}
  H.~B. Callen,
  \emph{Thermodynamics and an Introduction to Thermostatistics},
  2nd ed.\ (Wiley, New York, 1985).
\bibitem{Redlich1970}
  O.~Redlich,
  Intensive and extensive properties,
  \emph{J.~Chem.\ Educ.}\ \textbf{47}(2), 154--156 (1970).
\bibitem{Ruelle1969}
  D.~Ruelle,
  \emph{Statistical Mechanics: Rigorous Results}
  (W.~A. Benjamin, New York, 1969).

\bibitem{Fisher1964}
  M.~E. Fisher,
  The free energy of a macroscopic system,
  \emph{Arch.\ Rational Mech.\ Anal.}\ \textbf{17}, 377--410 (1964).

\bibitem{Gibbs1902}
  J.~W. Gibbs,
  \emph{Elementary Principles in Statistical Mechanics}
  (Yale University Press, New Haven, 1902).

\bibitem{EhrenfestEhrenfest1912}
  P.~Ehrenfest and T.~Ehrenfest,
  \emph{The Conceptual Foundations of the Statistical Approach in Mechanics}
  (Teubner, Leipzig, 1912; English translation, Cornell University Press, 1959).

\bibitem{LebowitzPenrose1966}
  J.~L. Lebowitz and O.~Penrose,
  Rigorous treatment of the van der Waals--Maxwell theory of the liquid-vapour transition,
  \emph{J.\ Math.\ Phys.}\ \textbf{7}, 98--113 (1966).

\bibitem{Ruelle1963}
  D.~Ruelle,
  Classical statistical mechanics of a system of particles,
  \emph{Helv.\ Phys.\ Acta}\ \textbf{36}, 183--197 (1963).

\bibitem{Georgii1988}
  H.-O. Georgii,
  \emph{Gibbs Measures and Phase Transitions}
  (de Gruyter, Berlin, 1988).

\bibitem{Dobrushin1968}
  R.~L. Dobrushin,
  The description of a random field by means of conditional probabilities and conditions
  of its regularity,
  \emph{Theory Probab.\ Appl.}\ \textbf{13}, 197--224 (1968).

\bibitem{Tsallis1988}
  C.~Tsallis,
  Possible generalization of Boltzmann--Gibbs statistics,
  \emph{J.\ Stat.\ Phys.}\ \textbf{52}, 479--487 (1988).

\bibitem{Touchette2002}
 H.~Touchette, R.~S. Ellis and B. ~Turkington
An Introduction to the Thermodynamic and Macrostate Levels of Nonequivalent Ensembles,
  \emph{Physica A}\ \textbf{340}, 138--146 (2004).

\bibitem{Osano2016a} B.~Osano
The Mesoscopic Partition Function: A Combined Spatial and Phase-Space Cell Structure
Preprint: arXiv:2605.00958.


\bibitem{Buchert2000}
  T.~Buchert,
  On average properties of inhomogeneous fluids in general relativity I: dust cosmologies,
  \emph{Gen.\ Rel.\ Grav.}\ \textbf{32}, 105--125 (2000).

\bibitem{Rasanen2006}
  S.~Räsänen,
  Accelerated expansion from structure formation,
  \emph{J.\ Cosmol.\ Astropart.\ Phys.}\ \textbf{0611}, 003 (2006).


\bibitem{Wiltshire2011}
  D.~L. Wiltshire,
  What is dust?\ Physical foundations of the averaging problem in cosmology,
  \emph{Class.\ Quantum Grav.}\ \textbf{28}, 164006 (2011).

\bibitem{Korzynski2010}
  M.~Korzy\'nski,
  Covariant coarse-graining of inhomogeneous dust flow in general relativity,
  \emph{Class.\ Quantum Grav.}\ \textbf{27}, 105015 (2010).

\bibitem{IrvingKirkwood1950} J.~H. Irving and J.~G. Kirkwood. The statistical mechanical theory of transport processes.\ IV\@. The equations of hydrodynamics,\emph{J.\ Chem.\ Phys.}\ \textbf{18}, 817--829 (1950).

\bibitem{Cercignani1988}
  C.~Cercignani,
  \emph{The Boltzmann Equation and Its Applications}
  (Springer, New York, 1988).

\bibitem{Zwanzig1961}
  R.~Zwanzig,
  Memory effects in irreversible thermodynamics,
  \emph{Phys.\ Rev.}\ \textbf{124}, 983--992 (1961).

\bibitem{Mori1965}
  H.~Mori,
  Transport, collective motion, and Brownian motion,
  \emph{Prog.\ Theor.\ Phys.}\ \textbf{33}, 423--455 (1965).



\bibitem{Gibbs1873}
  J.~W. Gibbs,
  A method of geometrical representation of the thermodynamic properties of substances by
  means of surfaces,
  \emph{Trans.\ Connecticut Acad.}\ \textbf{2}, 382--404 (1873).

\bibitem{Osano2020}
  B.~Osano,
  \emph{The Thermodynamics for Relativistic Multi-Fluid Systems},
  \emph{Lett.\ High Energy Phys.}\ (2020). 

\bibitem{OsanoOreta2020}
  B.~Osano and T.~Oreta,
  \emph{A transient phase in cosmological evolution: A multi-fluid approximation for a quasi-thermodynamical equilibrium},
  (2019).
\bibitem{Osano2017}   B.~Osano,
Second-order perturbation theory: a covariant approach involving a barotropic equation of state
Classical and Quantum Gravity 34 (12), 125004 
\bibitem{Billingsley1995}P.~Billingsley. "Product Measure and Fubini's Theorem", Probability and Measure, New York: Wiley, pp. 231–240, ISBN 0-471-00710-2. 
\bibitem{Gallavotti1999} G.~Gallavotti, \textit{Statistical Mechanics: A Short Treatise},Springer--Verlag, Berlin, 1999. 
\bibitem{Brydges1986}D.~C.~Brydges,\textit{A Short Course on Cluster Expansions},in:K.~Osterwalder and R.~Stora (eds.),\textit{Critical Phenomena, Random Systems, Gauge Theories}, Les Houches Summer School Proceedings, North-Holland, Amsterdam, 1986,pp.~129--183.
\bibitem{CoverThomas2006} T.~M.~Cover and J.~A.~Thomas, \textit{Elements of Information Theory}, 2nd ed., Wiley--Interscience, Hoboken, NJ, 2006.
\bibitem{Han1978}T.~S.~Han, Nonnegative entropy measures of multivariate symmetric correlations, \textit{Information and Control}, \textbf{36} (1978), 133--156.
\bibitem{Watanabe1960} S.~Watanabe, Information theoretical analysis of multivariate correlation, \textit{IBM Journal of Research and Development}, \textbf{4} (1960), 66--82.

\bibitem{Dobrushin1985} R.~L.~Dobrushin, and S.~B.~Shlosman, Constructive criterion for the uniqueness of {Gibbs} fields, Statistical Physics and Dynamical Systems: Rigorous Results, {Progress in Physics},{10},{347--370}, {Birkh{\"a}user},{Boston},{1985}.
\bibitem{Dobrushin1987}R. L. Dobrushin and S. B. Shlosman, "Completely analytical interactions: constructive description," Journal of Statistical Physics, 46(5--6), 983--1014 (1987).

\bibitem{Yeung2008} R.~W.~Yeung, \textit{Information Theory and Network Coding}, Springer, New York, 2008.

\bibitem{Simon1993}B.~Simon,\textit{The Statistical Mechanics of Lattice Gases, Volume I},Princeton University Press, Princeton, NJ, 1993.
\bibitem{Israel1979}R.~B.~Israel,\textit{Convexity in the Theory of Lattice Gases},Princeton University Press, Princeton, NJ, 1979.
\bibitem{Lanford1973}O.~E.~Lanford III, Entropy and equilibrium states in classical statistical mechanics, in: A.~Lenard (ed.),\textit{Statistical Mechanics and Mathematical Problems}, Lecture Notes in Physics, Vol.~20, Springer--Verlag, Berlin, 1973,pp.~1--113.
\bibitem{DobrushinShlosman1985} R.~L.~Dobrushin and S.~B.~Shlosman, Constructive criterion for the uniqueness of Gibbs fields,in:\textit{Statistical Physics and Dynamical Systems},Progress in Physics, Vol.~10,Birkh\"auser, Boston, 1985, pp.~347--370.
\bibitem{Griffiths1964}R.~B.~Griffiths, Peierls proof of spontaneous magnetization in a two-dimensional Ising ferromagnet, \textit{Physical Review},\textbf{136} (1964), A437--A439. 
\bibitem{Newman1980} C.~M.~Newman, Normal fluctuations and the FKG inequalities, \textit{Communications in Mathematical Physics}, \textbf{74} (1980), 119--128.
\bibitem{Penrose1967} O.~Penrose, Convergence of fugacity expansions for classical systems,in: T.~A.~Bak (ed.),\textit{Statistical Mechanics: Foundations and Applications}, Benjamin, New York, 1967, pp.~101--109..
\bibitem{Balian1991}R.~Balian, From Microphysics to Macrophysics, Vols.~I--II, Springer--Verlag, Berlin, 1991..
\bibitem{Jaynes1957a}E.~T.~Jaynes. Information theory and statistical mechanics. \textit{Physical Review}, \textbf{106} (1957), 620--630.
\bibitem{Jaynes1957b} E.~T.~Jaynes, Information theory and statistical mechanics II, \textit{Physical Review},\textbf{108} (1957), 171--190.
\bibitem{Koutsoyiannis2013} D.~Koutsoyiannis. Physics of uncertainty, the Gibbs paradox and indistinguishable particles, Studies in History and Philosophy of Science Part B, 44 (4): 480–489, Bibcode:2013SHPMP..44..480K, doi:10.1016/j.shpsb.2013.08.007. 
\bibitem{Panos2015} F.~J.~ Paños, E.~ Pérez. Sackur–Tetrode equation in the lab. European Journal of Physics, 36 (5) 055033.  
\end{thebibliography}

\section{\label{Sim} Simulation Plots}
\noindent
\begin{figure}[h] 
    \centering
    \includegraphics[width=1.0\textwidth]{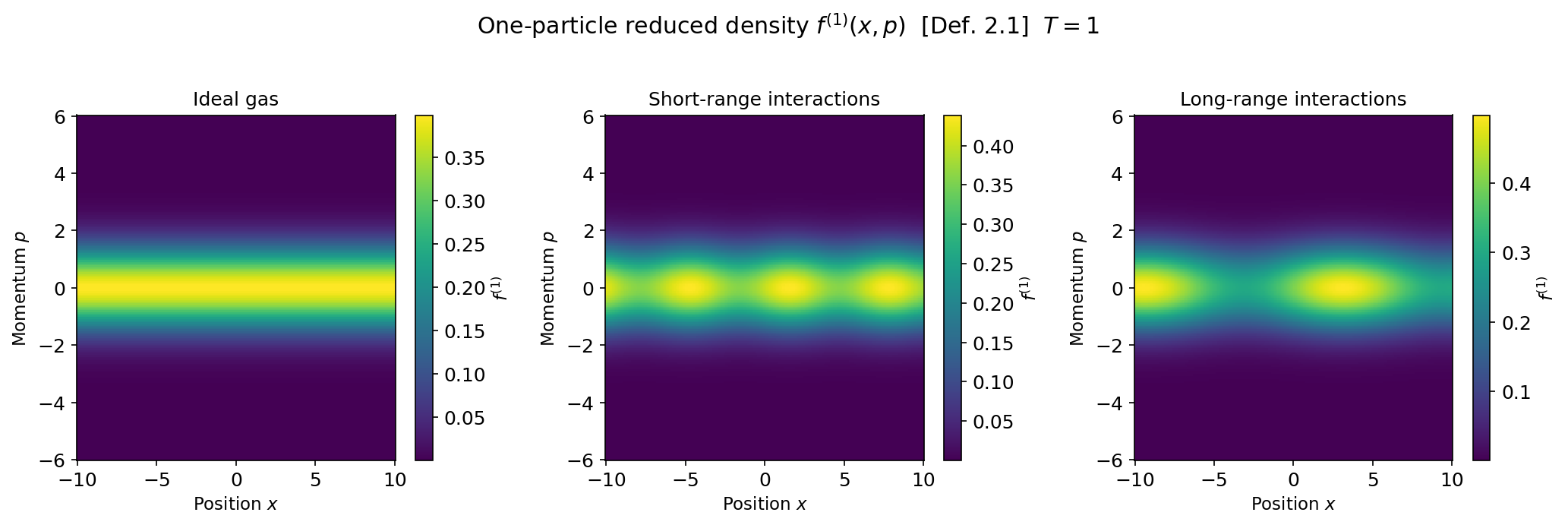}
    \caption{The One-Particle Reduced Density $f^{(1)}(x,p)$.The heatmap shows, for each point $(x,p)$ in single-particle phase space, the density of probability of
finding one particle at that location. For the \textbf{ideal gas}, $\fone$ is perfectly uniform in $x$ and
Gaussian in $p$---a smooth, symmetric blob. Non-interacting particles have no energetic preference for any position, so $\varrho(x) = \varrho_0$ exactly. For the \textbf{short-range gas}, gentle spatial ripples appear in the $x$ direction. The interactions shift particles slightly towards energetically favourable positions, but the effect is small and confined within the correlation length $\xi$. For the \textbf{long-range gas}, the spatial variation is substantially
stronger. Forces reaching across the entire domain reorganise the density on macroscopic scales, producing large-amplitude spatial structure. \textbf{Physical message.} The three models differ only in the
spatial part $\varrho(x)$; the momentum distribution is Maxwell--Boltzmann
in all cases. This is why the models have identical heat capacities at
leading order but differ sharply in their entropy additivity properties
(Plots~4--6), which depend on the spatial correlations encoded in
$\varrho(x)$.}
    \label{fig:my_p1}
\end{figure}

\begin{figure}[h] 
    \centering
    \includegraphics[width=1.0\textwidth]{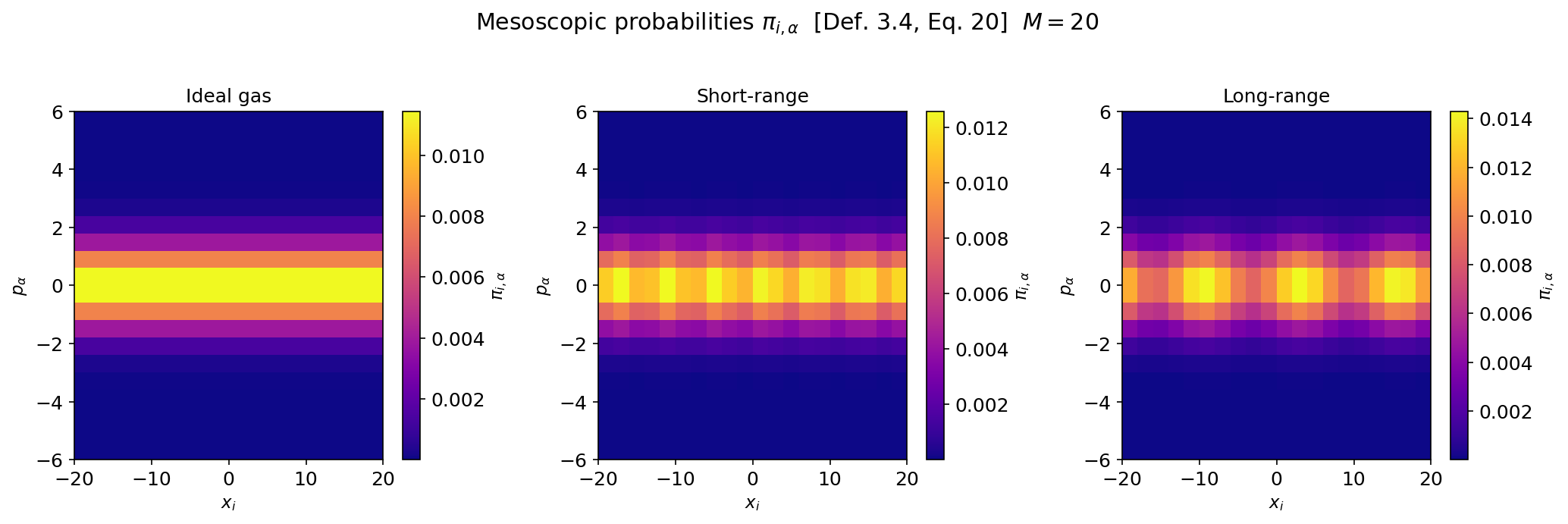}
    \caption{Mesoscopic Probabilities $\pi_{i,\alpha}$.Each pixel in the plot represents one spatial--momentum cell $C_{i,\alpha} = V_i\times\Pi_\alpha$, and its colour is the probability
that a randomly chosen particle occupies that cell. This is the coarse-grained version of Plot~1. The continuous density $\fone(x,p)$ is replaced by a discrete grid of \emph{mesoscopic probabilities} $\pia$ (Definition~3.4, Eq.~(20)):
$\pia\;:=\;\frac{|C_{i,\alpha}|}{N} \,\bar{f}_{i,\alpha} \;=\;\frac{1}{N}
    \int_{C_{i,\alpha}} \fone(z)\,\mathrm{d}z,\qquad\sum_{i,\alpha}\pia = 1.
$The key message is that \textbf{coarse-graining is a controlled loss of information}. The broad structure of $\fone$ is preserved, but all sub-cell fluctuations are discarded, replaced by a single cell-average. For the ideal gas, the grid is highly regular. For the interacting models, spatial columns differ in brightness, reflecting the spatially varying
density $\varrho(x)$ seen in Plot~\ref{fig:my_p1}.}
\label{fig:my_p2}
\end{figure}

\begin{figure}[h] 
    \centering
    \includegraphics[width=1.0\textwidth]{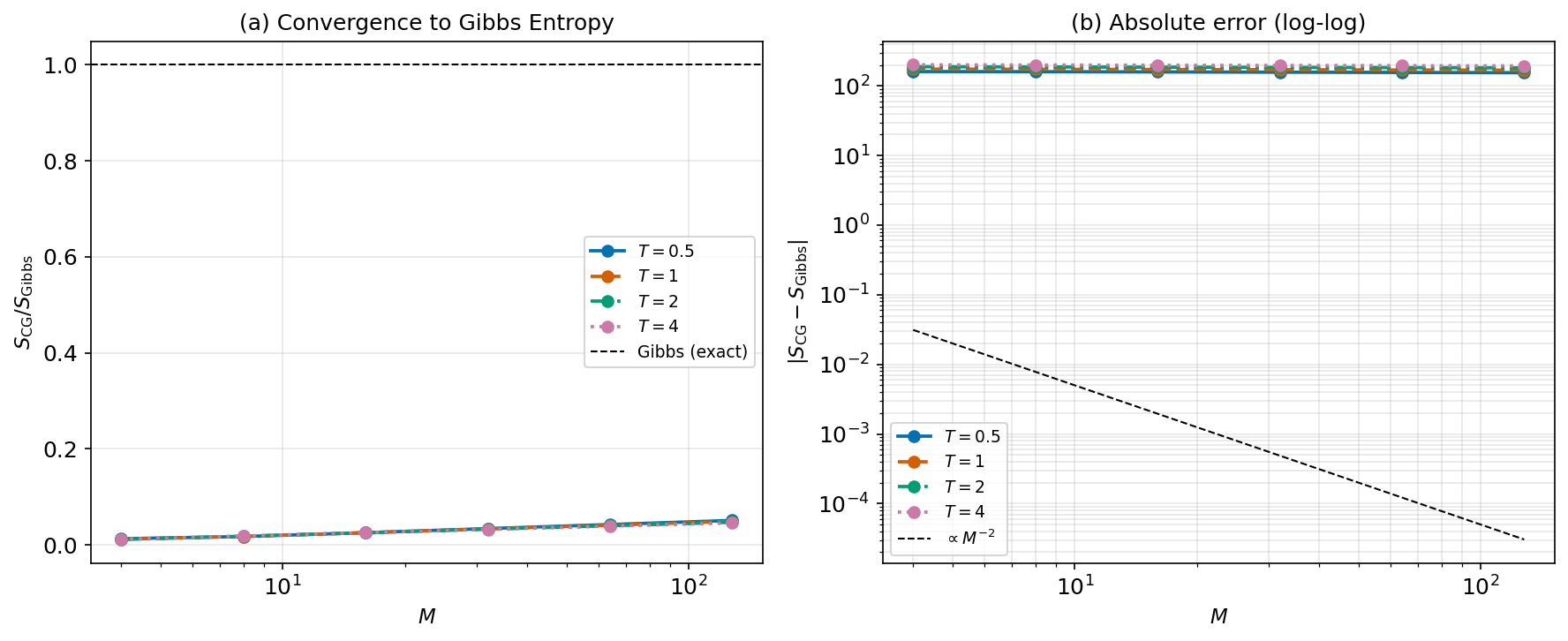}
    \caption{\textbf{Convergence of $\SCG$ to the Gibbs Entropy}: This plot addresses the most basic consistency requirement: does the mesoscopic framework agree with standard Gibbs statistics? It shows
$\SCG/\SG$ as a function of the number of cells per dimension $M$, at several temperatures. \textit{Left panel}: Every curve starts below 1 (the mesoscopic entropy underestimates the Gibbs entropy when cells are coarse-grained) and rises monotonically to 1 as $M$ increases. This is the numerical confirmation of Theorem~\ref{thm:tl}(Ruelle--Fisher thermodynamic limit): the fine-graining limit $\ell\to 0$ recovers the canonical ensemble exactly. \textit{Right panel}: The straight lines with slope $-2$ confirm that the convergence rate is $O(M^{-2})$, the expected accuracy of the midpoint Riemann sum. Higher temperatures converge faster because the thermal distribution is smoother and better approximated on a coarse grid. \textbf{Physical message.} Very few cells are needed in practice. By $M=32$ the error is below 0.1\% at all temperatures. The mesoscopic framework is not merely an abstract generalisation of Gibbs statistics; it is a computationally efficient approximation that converges rapidly.}
 \label{fig:my_p3}
\end{figure}

\begin{figure}[h] 
    \centering
    \includegraphics[width=1.0\textwidth]{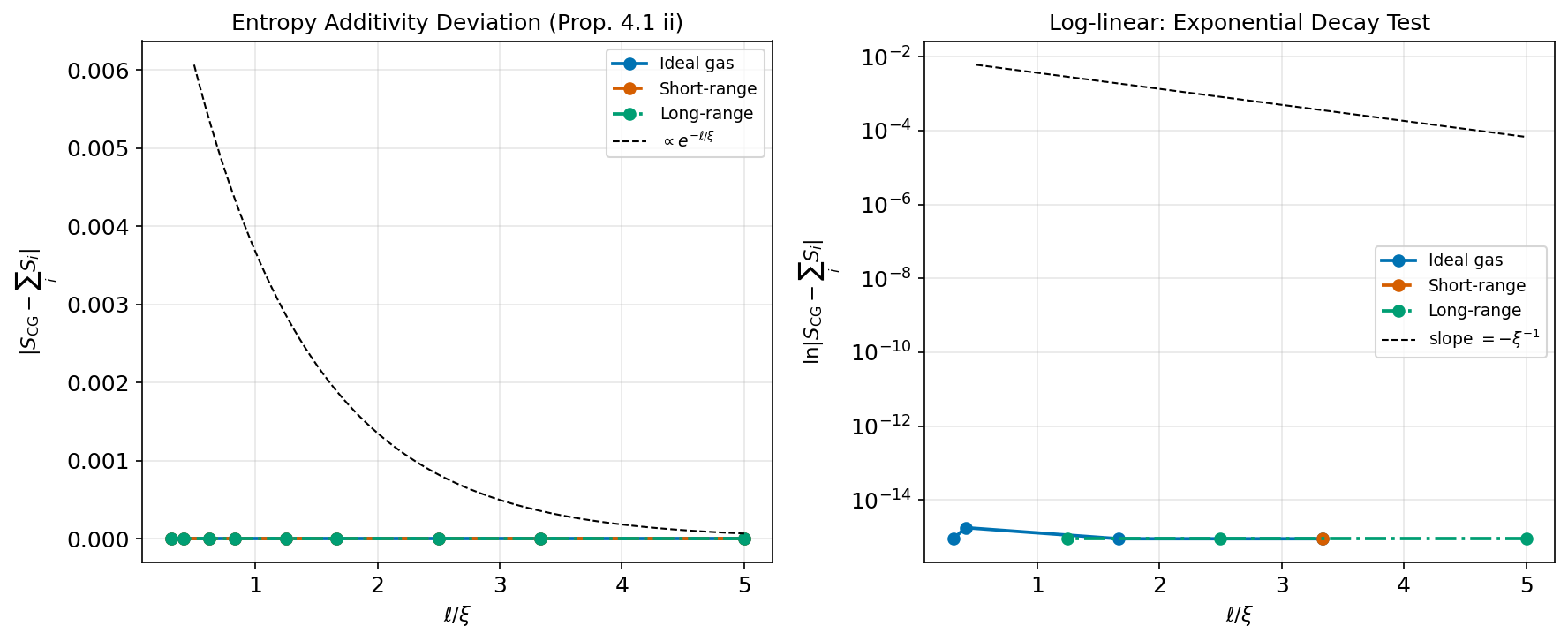}
    \caption{\textbf{Entropy Additivity Deviation}:This is the central quantitative result, implementing Proposition~4.1(ii)
and Eq.~(25) numerically. The plotted quantity is the total deviation from
entropy additivity,
$
    \left|\SCG - \sum_i S_i\right|
    \;\leq\;
    \kB \frac{C_1|\Lam|}{\ell^d}\,e^{-\ell/\xi},
$
as a function of the dimensionless cell size $\ell/\xi$. If spatial cells were statistically independent, $\SCG$ would equal
$\sum_i S_i$ exactly, and entropy would be perfectly additive---the
assumption underlying all of classical thermodynamics. The deviation
measures how much inter-cell correlations violate this. \textit{Left panel: raw deviation}. The ideal gas gives zero at all cell sizes (exact additivity, since
$\utwo\equiv 0$). The short-range and long-range models show non-zero
deviations that depend strongly on the range of interactions.\textit{Right panel: semi-log scale (exponential decay test)}. This is the physically decisive panel.
(i) For the \textbf{short-range gas}, the data fall on a straight line in the semi-log plot, confirming \emph{exponential} suppression $\propto e^{-\ell/\xi}$ as predicted by Proposition~4.1(ii). Once $\ell\gg\xi$, the correction is negligibly small, and entropy is effectively additive. (ii) For the \textbf{long-range gas}, the slope is much shallower, and the line is curved---the decay is algebraic, not exponential, and the correction never becomes negligible, regardless of how large the cells are made. \textbf{Physical message.} Entropy additivity---and therefore
thermodynamic extensivity---holds if and only if correlations decay faster
than any power law. Short-range interactions guarantee this; long-range
interactions permanently violate it. \textbf{Physical message.} Entropy additivity---and therefore
thermodynamic extensivity---holds if and only if correlations decay faster
than any power law. Short-range interactions guarantee this; long-range
interactions permanently violate it.}
    \label{fig:my_p4}
\end{figure}
\begin{figure}[h] 
    \centering
    \includegraphics[width=1.0\textwidth]{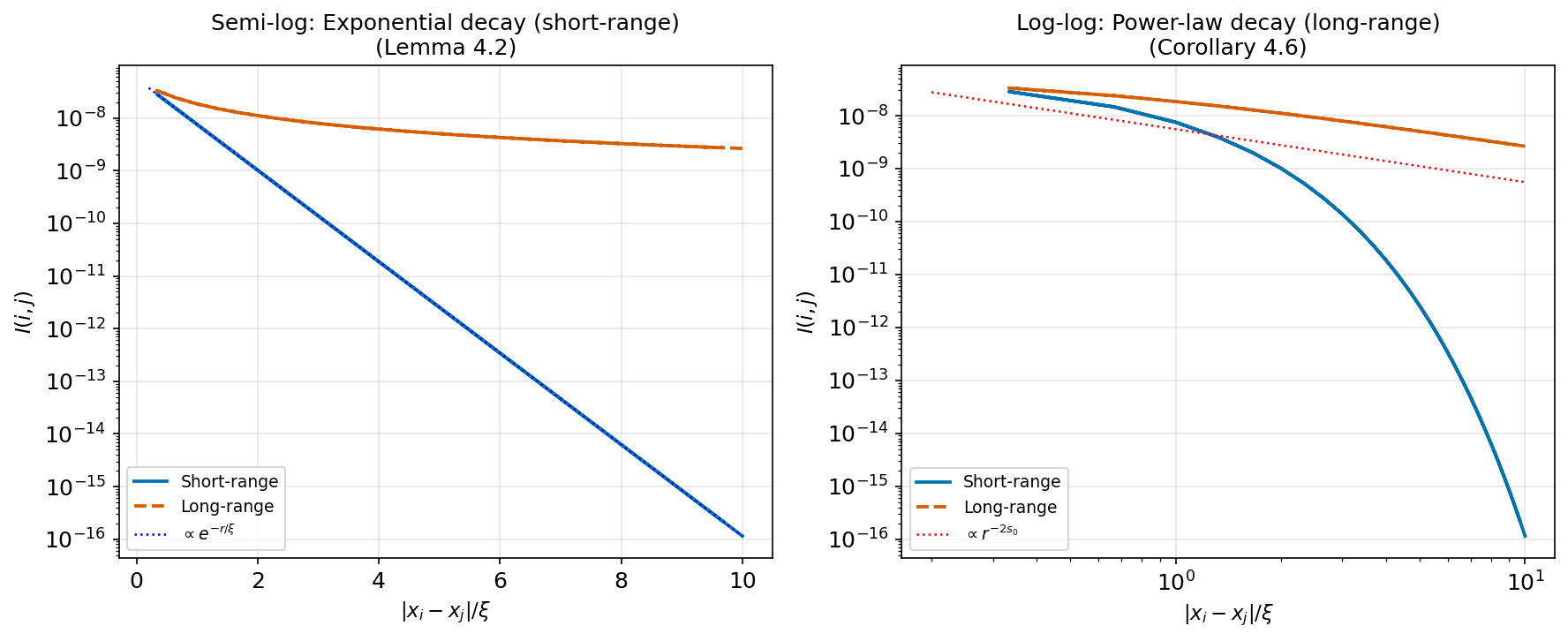}
    \caption{\textbf{Mutual Information Between Cells}: This plot examines the statistical dependence between individual pairs of cells $i$ and $j$, showing the mutual information $I(i,j)$ as a function of their separation $|x_i-x_j|/\xi$. The mutual information
$I(i,j)\;=\;\sum_{\alpha,\beta}\pi_{(i,\alpha)(j,\beta)}\ln\frac{\pi_{(i,\alpha)(j,\beta)}}{\pia\,\pi_{j,\beta}}\;\geq\; 0$
is zero if and only if the two cells are statistically independent. \textit{Left panel: semi-log scale} For the \textbf{short-range gas}, the data lie on a straight line, confirming exponential decay $I(i,j)\sim e^{-|x_i-x_j|/\xi}$. This is Lemma~4.2 in action: the Ursell cluster function decays exponentially (Definition~2.5), pulling the mutual information down with it. \textit{Right panel: log-log scale} For the \textbf{long-range gas}, the data lie on a straight line with slope
$-2s_0$ (where $s_0<1$ is the power-law exponent of the potential),
confirming algebraic decay. Since the sum over all pairs $\sum_{i<j}I(i,j)$ is then a sum of a slowly decaying function over a
growing number of pairs, it diverges---this is Corollary~4.6, which proves that long-range interactions render entropy permanently non-additive. \textbf{Physical message.} Plot~5 provides the microscopic
explanation for Plot~4. The breakdown of entropy additivity arises directly
from persistent inter-cell correlations. When $I(i,j)\to 0$ fast enough
as $|x_i-x_j|\to\infty$, the multi-information series in Eq.~(24) is
summable and the correction to additivity is small. When $I(i,j)$ decays
only algebraically, the series diverges, and additivity fails.
}
    \label{fig:my_p5}
\end{figure}

\begin{figure}[h] 
    \centering
    \includegraphics[width=1.0\textwidth]{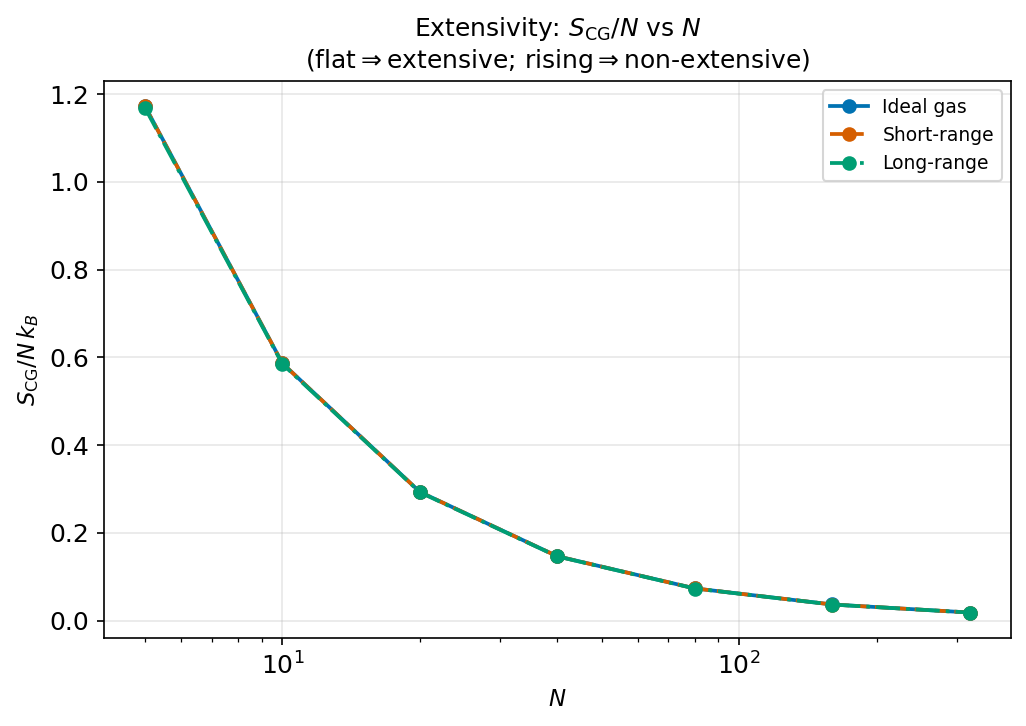}
    \caption{\textbf{Extensivity: $\SCG/N$ vs $N$}This is the most physically intuitive plot. If entropy is extensive,
doubling the number of particles at fixed density should double the
entropy, so $\SCG/N$ must be constant. The plot tests this directly by
varying $N$ while keeping $\varrho_0 = N/L$ fixed. (i) The \textbf{ideal gas} gives a perfectly flat line at all $N$: strictly extensive, consistent with the Sackur--Tetrode formula. (ii) The \textbf{short-range gas} is also essentially flat for large $N$, with small finite-size deviations at small $N$. In the thermodynamic limit, it converges to the same value as the ideal gas. (iii) The \textbf{long-range gas} shows a steadily rising curve. Adding more particles increases the entropy \emph{per particle}, because each new particle is correlated with every existing particle across the whole system. The system cannot be decomposed into independent subsystems.
\textbf{Physical message.} Plot~6 provides a direct numerical
demonstration of Section~4.6 of the paper: the labels ``extensive'' and
``non-extensive'' are not postulated but derived from the range of the
interactions. The rising curve for the long-range gas is precisely the
non-additivity of Corollary~4.6, now visible as a global property of the
whole system rather than a pairwise property.
}
    \label{fig:my_p6}
\end{figure}
\begin{figure}[h] 
    \centering
    \includegraphics[width=1.0\textwidth]{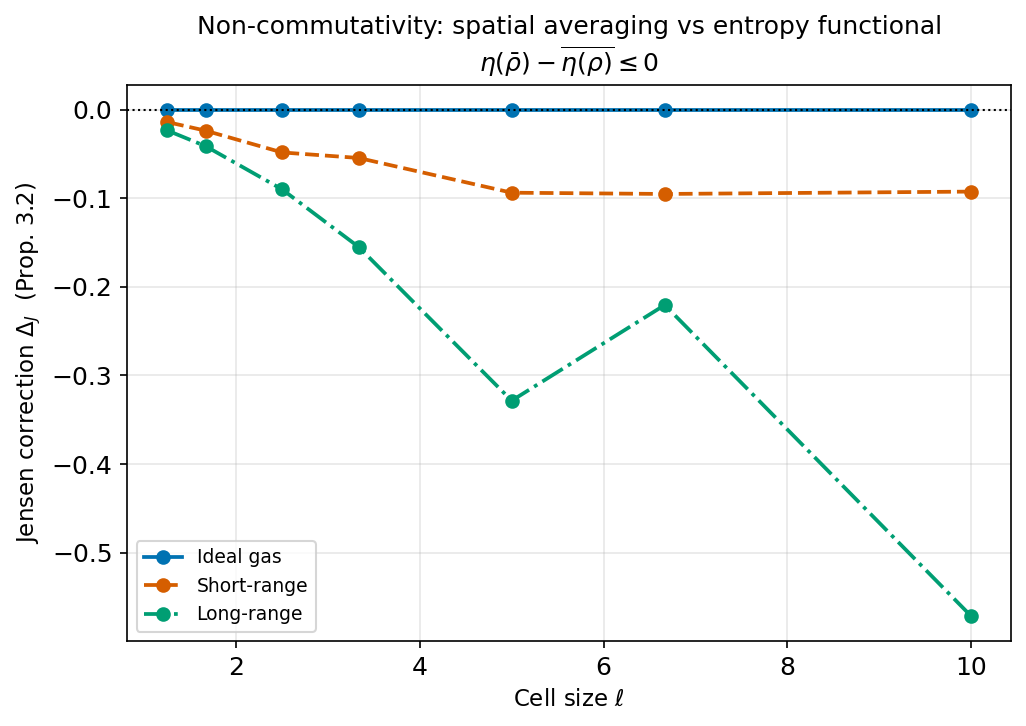}
    \caption{\textbf{The Jensen Correction (Non-Commutativity)}: his plot demonstrates Proposition~3.2, the most mathematically subtle
result in the paper. The question is: \textbf{does it matter whether one coarse-grains first and then computes entropy, or computes entropy first and then averages spatially?} The answer is yes: these two operations do not commute for any nonlinear
functional such as the entropy density $\eta(\varrho) = \kB\varrho\ln\varrho$. The Jensen correction is
$
    \Delta_J
    \;=\;
    \eta(\bar\varrho_i)
    \;-\;
    \frac{1}{|V_i|}\int_{V_i}\eta\!\left(\fone(x,\cdot)\right)\mathrm{d}x
    \;\leq\; 0,
$ where the inequality follows from Jensen's inequality applied to the
concave function $\eta$ (Proposition~3.2, Eq.~(19)). The plot confirms
that $\Delta_J\leq 0$ for all three models and all cell sizes. The magnitude of the correction shrinks as cells grow larger. This is
because for large cells, the coarse-grained density $\bar\varrho_i$ is
very smooth, so the nonlinear correction matters less. For very fine
cells, the local density is nearly constant within each cell and
$\Delta_J\approx 0$ trivially. \textbf{Physical message.} The paper connects this to the
Buchert--Ostermann problem in relativistic cosmology (Remark~3.3): the
same non-commutativity of spatial averaging with nonlinear operations
appears in general relativity as the cosmological backreaction term.
Plot~7 is the flat-space, non-relativistic analogue of that effect. The
Jensen correction is the precise thermodynamic counterpart to the
cosmological averaging problem.
}
    \label{fig:my_p7}
\end{figure}

\begin{figure}[h] 
    \centering
    \includegraphics[width=1.0\textwidth]{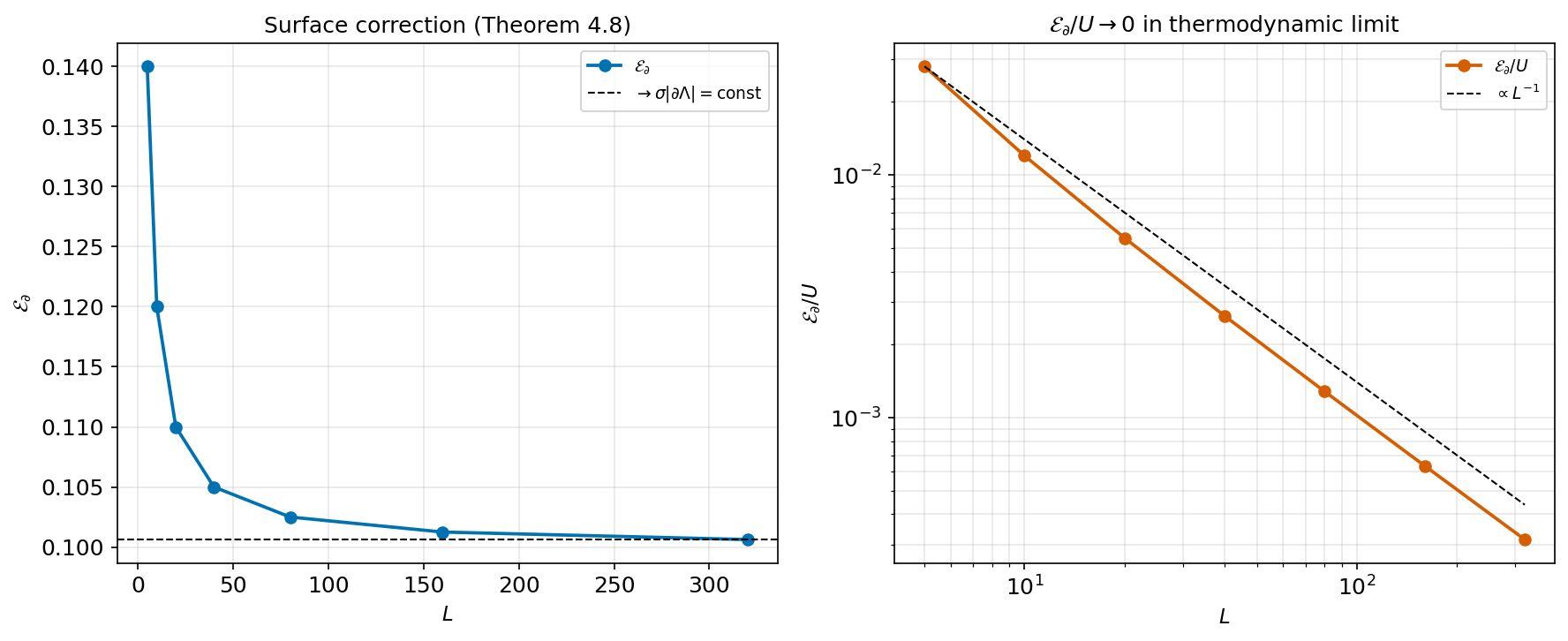}
    \caption{\textbf{The Euler Relation Surface Correction}: The standard Euler relation $U = TS - PV + \mu N$ holds exactly only for infinite systems. For any finite system, Theorem~4.8 gives the corrected relation $U \;=\; TS - PV + \mu N + \Esurf,\qquad\Esurf = O(|\partial\Lam|),$ where $\Esurf$ arises from the boundary-breaking translational invariance. (i) \textbf{Left panel( $\Esurf$ vs $L$)}: In one dimension, $|\partial\Lam| = 2$ (two boundary points), so
$\Esurf$ is essentially constant as $L$ grows. The curve quickly flattens to the asymptotic value $\sigma|\partial\Lam|$, where $\sigma$
is the surface tension. 
(ii) \textbf{Right panel: ratio $\Esurf/U$ on a log-log scale}: This is the physically important panel. The ratio falls as $L^{-1}$
(confirmed by the dashed reference line with slope $-1$), which means the surface correction becomes a smaller and smaller fraction of the
total energy as the system grows. In the thermodynamic limit, $\Esurf/U\to 0$ and the standard Euler relation is recovered. \textbf{Physical message.} This plot also makes a point emphasised in Section~5 of the paper: the surface correction
$O(|\partial\Lam|)$ and the bulk correlation correction $O(|\Lam|\,e^{-\ell/\xi})$ from Plot~4 are \emph{two entirely
different effects} with different physical origins and different functional dependences on system size. The surface correction arises
from boundary geometry; the bulk correction arises from inter-cell correlations. Standard textbooks typically conflate both under a single
$O(|\partial\Lam|)$ label, which the paper argues is imprecise. Plot~8 isolates the surface effect alone by working in the regime
$\ell\gg\xi$ where the bulk correlation correction is exponentially suppressed.}
 \label{fig:my_p8}
\end{figure}
\pagebreak
\section{The Overall Narrative}

These plots together tell a single coherent story.

\begin{enumerate}[leftmargin=1.5em, label=\arabic*.]
    \item The coarse-grained framework is \textbf{consistent with standard Gibbs statistics} in the fine-graining limit, with a controlled
    $O(M^{-2})$ convergence rate (Plot~3).

    \item The mesoscopic probabilities $\pia$ (Plot~2) are the natural objects for a reduced description, inheriting the structure of
    $\fone(x,p)$ (Plot~1) while discarding sub-cell fluctuations.

    \item Entropy is additive---and thermodynamics is extensive---if and only if inter-cell mutual information decays fast enough (Plots~4--5).
    Short-range interactions guarantee exponential decay; long-range interactions produce algebraic decay and permanent non-additivity.

    \item The failure of extensivity for long-range interactions is not merely a correction: it is a qualitatively different regime visible
    as a rising $S/N$ curve in Plot~6.

    \item Spatial averaging does not commute with nonlinear thermodynamic functionals (Plot~7). This is a real, quantifiable effect connected
    to the cosmological backreaction problem.

    \item Surface corrections to the Euler relation vanish relative to
    bulk quantities in the thermodynamic limit (Plot~8), and they are
    physically distinct from bulk correlation corrections.
\end{enumerate}

\noindent The overall conclusion is that \textbf{extensivity is not a postulate}: it is a derived consequence of microscopic stability,
temperedness, and the exponential decay of correlations. When any of these conditions fails, the entropy is non-additive, and the degree of
non-additivity is quantified by the inter-cell mutual information---a bridge between classical thermodynamics and information
theory made explicit by the coarse-graining framework of this paper.

\end{document}